\documentclass[11pt]{article}
\usepackage{graphicx}
\usepackage{fullpage}
\usepackage{amsmath,amsthm,amssymb}
\usepackage{amsmath,amssymb}
\usepackage{color}
\usepackage{verbatim}
\usepackage{microtype}
\usepackage{algorithm}
\usepackage{algpseudocode}

\bibliographystyle{spmpsci}    
\begin{document}

\def\lf{\left\lfloor}   
\def\rf{\right\rfloor}
\def\lc{\left\lceil}   
\def\rc{\right\rceil}

\newcommand{\spcut}{Sparsest-Cut }
\newtheorem{theorem}{Theorem}
\newtheorem{proposition}[theorem]{Proposition}
\newtheorem{claim}[theorem]{Claim}
\newtheorem{lemma}[theorem]{Lemma}
\newtheorem{corollary}[theorem]{Corollary}

\newcommand{\tw}{\mathsf{tw}}
\newcommand{\dem}{\mathsf{dem}}
\newcommand{\OPT}{\mathsf{OPT}}

\title{Quasimetric embeddings and their applications
\footnote{An extended abstract of this work appeared in ICALP 2016.}
\footnote{This work was supported by NSF grants CCF-1423230, CCF-1526513, IIS-1422400, and award CAREER-1453472.}}

\date{}

\author{Facundo M\'{e}moli\\
Dept. of Computer Science and Engineering and Dept.~of Mathematics\\
The Ohio State University\\
  Columbus OH, USA\\
  \texttt{memoli@math.osu.edu}
\and Anastasios Sidiropoulos \\
Dept. of Computer Science and Engineering and Dept.~of Mathematics\\
The Ohio State University\\
  Columbus OH, USA\\
  \texttt{sidiropoulos.1@osu.edu}
\and Vijay Sridhar\\
Dept. of Computer Science and Engineering\\
The Ohio State University\\
  Columbus OH, USA\\
  \texttt{sridhar.38@buckeyemail.osu.edu}}

\maketitle

\begin{abstract}
We study generalizations of classical metric embedding results to the case of \emph{quasimetric} spaces; that is, spaces that do not necessarily satisfy symmetry.
Quasimetric spaces arise naturally from the shortest-path distances on directed graphs.
Perhaps surprisingly, very little is known about low-distortion embeddings for quasimetric spaces.

Random embeddings into \emph{ultrametric} spaces are arguably one of the most successful geometric tools in the context of algorithm design.
We extend this to the quasimetric case as follows.
We show that any $n$-point quasimetric space supported on a graph of treewidth $t$ admits a random embedding into \emph{quasiultrametric} spaces with distortion $O(t \log^2 n)$, where  quasiultrametrics are a natural generalization of ultrametrics.
This result allows us to obtain  $t\log^{O(1)} n$-approximation algorithms for the Directed Non-Bipartite Sparsest-Cut and the Directed Multicut problems on $n$-vertex graphs of treewidth $t$, with running time polynomial in  both $n$ and $t$.

The above results are obtained by considering a generalization of random partitions to the quasimetric case, which we refer to as \emph{random quasipartitions}.
Using this definition and a construction of [Chuzhoy and Khanna 2009] we derive a polynomial lower bound on the distortion of random embeddings of general quasimetric spaces into quasiultrametric spaces.
Finally, we establish a lower bound for embedding the shortest-path quasimetric of a graph $G$ into graphs that exclude $G$ as a minor.
This lower bound is used to show that several embedding results from the metric case do not have natural analogues in the quasimetric setting.
\end{abstract}

\section{Introduction}

Low-distortion embeddings of metric spaces have become an indispensable tool in the design of algorithms \cite{linial1995geometry,arora2008euclidean,bartal1998approximating,indyk20048}. 
We consider generalizations of some fundamental metric embedding problems to the case of \emph{quasimetric spaces}.
Formally, a quasimetric space is a pair $(X,d)$ where $X$ is the set of points and $d:X\times X\to \mathbb{R}_{+}\cup\{+\infty\}$, satisfying the following conditions:

\begin{description}
\item{(C1)}
For all $x,y\in X$, $d(x,y)=0$ iff $x=y$.
\item{(C2)}
For all $x,y,z\in X$, $d(x,y)\leq d(x,z)+d(z,y)$.
\end{description}

In other words, a quasimetric space satisfies all the conditions of a metric space, except for the following symmetry condition:

\begin{description}
\item{(C3)}
For all $x,y\in X$, $d(x,y)=d(y,x)$.
\end{description}

Finite quasimetric spaces are precisely the  shortest-path distances of finite directed graphs.
Perhaps surprisingly, many basic questions regarding low-distortion embeddings of quasimetric spaces are poorly understood.
As we explain below, these problems are of importance in algorithm design, and are
intricately tied to the approximability of various cut problems on directed edge-capacitated graphs.

\subsection{Our contributions}
We now outline our main results, and contrast with what was previously known for the case of metric spaces.
We consider quasimetric spaces that arise from the shortest-path distances of directed graphs.
For a family ${\cal F}$ of undirected graphs, we consider the directed graphs arising from the graphs in ${\cal F}$ by replacing every undirected edge $\{u,v\}$ by two edges $(u,v)$ and $(v,u)$ with opposite directions, and by assigning arbitrary positive edge lengths to them.
When ${\cal F}$ is the family of all trees (resp.~graphs of treewidth-$t$), we refer to the resulting family of quasimetric spaces as \emph{tree quasimetric spaces} (resp.~\emph{treewidth-$t$ quasimetric spaces}).

\paragraph*{Random embeddings}
A very successful metric embedding tool in the context of algorithm design is random embeddings.
The high-level idea is that given some ``complicated'' space, we can find a random embedding into some ``simpler'' space, preserving all distances in expectation.

Formally, let $M=(X,d)$ be a quasimetric space.
A \emph{random embedding} of $M$ is a distribution ${\cal F}$ over pairs $(f, M')$ where $M'=(X',d')$ is a quasimetric space and $f:X\to X'$, such that for any $x,y\in X$
\[
\Pr[d'(f(x), f(y)) \geq d(x,y)] = 1.
\]
Let $\alpha\geq 1$.
We say that the random embedding ${\cal F}$ has \emph{distortion} $\alpha$ if for all $x,y\in X$
\[
\mathbb{E} \left[ d'(f(x), f(y)) \right] \leq \alpha\, d(x,y).
\]

This definition is of algorithmic interest because for several optimization problems it allows us to reduce instances on general graphs to instances on simpler graphs (see \cite{bartal1996probabilistic,sidiropoulos2010optimal} for a more detailed exposition).
It has been shown that any $n$-point metric space admits a random embedding into \emph{ultrametric} spaces with distortion $O(\log n)$ \cite{fakcharoenphol2003tight} (see also \cite{bartal1996probabilistic,bartal1998approximating,abraham2008nearly}).
Here, an ultrametric space is a metric space that satisfies the following stronger version of the triangle inequality:

\begin{description}
\item{(C2*)}
For all $x,y,z\in X$, $d(x,y)\leq \max\{d(x,z),d(z,y)\}$.
\end{description}

It is easy to construct examples of quasimetric spaces that do not admit random embeddings into ultrametric spaces with bounded distortion.
This motivates the study of random embeddings of quasimetric spaces into \emph{quasiultrametric} spaces; these are precisely the quasimetric spaces that satisfy conditions (C1) \& (C2*).
We show that any $n$-point treewidth-$t$ quasimetric space admits a random embedding into quasiultrametric spaces with distortion $O(t \log^2 n)$.
In a similar fashion, we show that any treewidth-$t$ quasiultrametric admits an embedding into a convex combination of 0-1 quasimetric spaces with distortion $O(t \log^2 n)$; here, a 0-1 quasimetric space requires that all distances are either 0 or 1.
As we explain below, this result allows us to obtain new approximation algorithms for directed cut problems on treewidth-$t$ graphs.

\paragraph*{Random quasipartitions}
A fundamental primitive underlying many metric embedding results is \emph{random partitions} \cite{bartal1996probabilistic}.
This primitive has been successfully used in many diverse prolems \cite{klein1993excluded,krauthgamer2005measured,kelner2011metric,lee2005extending,calinescu2005approximation,lee2009geometry,lee2010genus}.
It is easy to construct examples of quasimetric spaces that do not admit good random partitions.
We overcome this technical obstacle by defining a \emph{quasipartition} to be a transitive reflexive relation.
This is a generalization of a partition for the following reason:
For a partition $P$ of some set $X$ we can define the relation $R$ on $X$ where for all $x,y\in X$, we set $(x,y)\in R$ iff $x$ and $y$ are in the same cluster in $P$.
It is easy to check that $R$ is indeed transitive and reflexive; however, there are transitive and reflexive relations that do not arise in this fashion.

Let $r\geq 0$.
We say that a quasipartition $P$ of $M$ is \emph{$r$-bounded} if for any $x,y\in X$, if $(x,y)\in P$, then $d(x,y)\leq r$.
Let ${\cal D}$ be a distribution over $r$-bounded quasipartitions of $M$.
We say that ${\cal D}$ is $r$-bounded.
We also say that ${\cal D}$ is \emph{$\beta$-Lipschitz}, for some $\beta>0$, if for any $x,y\in X$, we have that 
\[
\Pr_{P\sim {\cal D}}[(x,y)\notin P] \leq \beta \frac{d(x,y)}{r}.
\]
Given a distribution $\mathcal{D}$ over quasipartitions we ocasionally refer to any quasipartition $P$ sampled from $\mathcal{D}$ as a random quasipartition (with distribution $\mathcal{D}$).

We show that for all $r>0$, any tree quasimetric space admits a $O(1)$-Lipschitz, $r$-bounded random quasipartition.
We remark that no such result is possible using random partitions.
We further show that for all $r>0$, any treewidth-$t$ quasimetric space admits a $O(t \log n)$-Lipschitz, $r$-bounded random quasipartition.
This random quasipartition is at the heart of the random embedding result outlined above.

Using a result of Chuzhoy and Khanna \cite{directedcut} we show that the polynomial dependence on the treewidth is necessary.
More precisely, there exist $n$-point quasimetric spaces that do not admit $o(n^{1/7}/\log^{4/7} n)$-Lipschitz quasipartitions.
Using this lower bound we further show that there exist $n$-point quasimetrics that do not admit random embeddings into quasiultrametric spaces with distortion $o(n^{1/7}/\log^{4/7})$.

\paragraph*{Applications to cut problems on directed graphs}
Using the above result for embedding treewidth-$t$ quasimetric spaces into a convex combination of 0-1 quasimetric spaces, we show that the integrality gap of the Directed Non-Bipartite \spcut LP on graphs of treewidth $t$ it $O(t \log^2 n)$.
This implies a $O(t  \sqrt{\log t} \log^2 n)$-approximation algorithm for the Directed \spcut problem with running time polynomial in both $n$ and $t$.
We remark that dynamic-programming based techniques for graphs of bounded treewidth can only yield algorithms with running time exponential in $t$, and are thus practical only for very small values of $t$.
Our result provides an interesting trade-off between running time and approximation guarantee for graphs of moderately large treewidth.
For example, our result implies a polynomial time $\log^{O(1)}$-approximation for Directed Non-Bipartite Sparsest-Cut on graphs of treewidth $\log^{O(1)} n$.

Similarly, we obtain a $O(t  \sqrt{\log t} \log^3 n)$-approximation algorithm for the Directed Multicut problem on graphs of treewidth $t$, with running time polynomial in both $n$ and $t$.

\paragraph*{Lower bound for random topological simplification}
It has been show that for various classes of topologically restricted graphs, there exist constant-distortion random embeddings into topologically simpler graphs.
For example, graphs of bounded genus admit constant-distortion random embeddings into planar graphs \cite{indyk2007probabilistic,borradaile2009randomly,sidiropoulos2010optimal}, and similar results are known for more general classes of minor-free graphs \cite{lee2009geometry}.

We show that no such result is possible for the case of directed graphs.
More precisely, we show that for any directed acyclic graph $G$, there exists a subdivision $G'$ of $G$, such that for any embedding of the shortest-path quasimetric of $G'$ into the shortest-path quasimetric of some graph $H$ with bounded distortion, we have that $G$ is a minor of $H$.
For example, this implies that there is no  bounded-distortion random embedding of toroidal (i.e.~genus-1) quasimetric spaces into planar quasimetric spaces.

\section{Quasipartitions of tree quasimetrics}\label{sec:tree}
In this section we describe a method to construct an $O(1)$-Lipschitz distribution over $r$-bounded quasipartitions of tree quasimetric spaces.
More precisely, we prove the following result.

\begin{theorem}\label{thm:trees}
Let $M$ be a shortest path quasimetric space supported on some directed tree $T$ in which every edge has non-negative weights in both directions. For any $r>0$, there exists an $O(1)$-Lipschitz distribution over $r$-bounded quasipartitions of $M$.
\end{theorem}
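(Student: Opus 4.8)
The plan is to build the random quasipartition by a recursive decomposition of the tree, at each step choosing a random ``radius'' threshold and cutting the tree into a root-region plus the subtrees hanging below it. Root the tree $T$ at an arbitrary vertex $\rho$. Pick a uniform random $x \in [0, r/2)$ (or use the standard geometric/exponential-type random shift at scale $r$), and then walk down from $\rho$: a vertex $v$ is placed in the root cluster $C_\rho$ as long as the downward distance $d(\rho, v)$ from the root is less than the threshold; the first edge on a root-to-leaf path whose inclusion would exceed the threshold is ``cut'', and we recurse on the subtree below that edge with a freshly sampled shift. This produces a collection of regions $C_1, C_2, \dots$; the natural quasipartition is then $P = \bigcup_i (C_i \times C_i)$ restricted to pairs that are actually ``close enough'' — but here is where the asymmetry is used: because $d$ is a quasimetric, $d(u,v)$ and $d(v,u)$ can differ, so rather than taking the full product $C_i \times C_i$ we only include the ordered pairs $(u,v)$ with $u,v$ in the same region and $d(u,v) \le r$. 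One must check this relation is transitive and reflexive (reflexivity is immediate; transitivity needs the regions to be ``convex'' along tree paths, which the construction guarantees, together with a careful bound showing concatenation of two in-region hops stays within $r$ — this may force the recursion threshold to be $r/2$ rather than $r$, or a similar constant adjustment).

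The key steps, in order: (1) set up the rooted-tree recursion with a random shift at scale $r$, defining the regions; (2) verify $r$-boundedness: every ordered pair kept in $P$ has $d(u,v) \le r$ by construction — this is essentially immediate once the threshold is set correctly; (3) verify that $P$ is a genuine quasipartition (reflexive and transitive) — this uses the fact that each region is connected in $T$ and ``closed under taking the tree path between its points'', so that a valid $u \to w$ pair composed with a valid $w \to v$ pair yields a $u \to v$ pair in the same region, and the triangle inequality plus the radius bookkeeping keeps $d(u,v) \le r$; (4) bound the cutting probability: for a pair $x,y$, the relation $(x,y) \notin P$ happens only if some edge on the tree path between $x$ and $y$ is cut by the recursion, or if $x,y$ land in the same region but $d(x,y) > r$ — the latter is controlled because if both are in a region of ``radius'' $< r/2$ from its local root then $d(x,y) \le d(x,\text{root}) + d(\text{root},y)$ is bounded. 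For the cut probability, the standard argument for the random-shift decomposition on a path/tree shows that the probability a particular edge of length $\ell$ on the $x$–$y$ path is the ``cutting edge'' is $O(\ell / r)$, and summing over the edges on the path gives $O(d(x,y)/r)$ — hence $O(1)$-Lipschitz.

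The main obstacle I expect is the interaction between asymmetry and transitivity in step (3): in the metric case a ball of radius $r/2$ automatically gives a set of diameter $\le r$ whose induced equivalence relation is transitive for free, but with a quasimetric the ``downward'' cut from the root controls $d(\rho, \cdot)$ only in one direction, so one has to be careful that pairs $(u,v)$ where the tree path goes ``up then down'' through the local root still satisfy $d(u,v) \le r$ and that composing such pairs does not escape the region or the radius budget. The fix is to define the region by the downward-distance threshold and to only certify ordered pairs whose forward cost is verified directly, checking transitivity by a short case analysis on where the common point $w$ sits relative to $u$ and $v$ on the tree (and possibly shrinking the scale by a constant factor); once this bookkeeping is done, $r$-boundedness and the $O(1)$-Lipschitz bound follow from the standard random-shift calculation, now carried out with the directed edge lengths. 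A secondary point to handle cleanly is the possibility of infinite distances (edges are non-negative but the quasimetric may take value $+\infty$ in some direction); such pairs are simply never included in any $P$, which is consistent with the Lipschitz bound since $d(x,y)/r = +\infty$ there.
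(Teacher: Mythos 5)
There is a genuine gap in your construction, and it is exactly at the point you flag as ``the main obstacle.'' Your regions are carved using only the downward distance $d(\rho,\cdot)$ from the (local) root. For two vertices $u,v$ in the same region with LCA $w$, the downward leg $d(w,v)$ is indeed controlled by the threshold, but the upward leg $d(u,w)$ is completely uncontrolled, so $d(u,v)=d(u,w)+d(w,v)$ can be arbitrarily large even inside a single region. Your proposed repair --- take $P=\bigcup_i (C_i\times C_i)$ and then discard ordered pairs with $d(u,v)>r$ --- destroys transitivity: if $(u,w)\in P$ and $(w,v)\in P$ you only get $d(u,v)\le 2r$, and there is no way to rule out $d(u,v)>r$, so the pruned relation is not a quasipartition. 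Shrinking the recursion radius does not help, because the radius governs $d(\rho_C,\cdot)$ while the offending quantity is the reverse distance $d(\cdot,\rho_C)$; no choice of constant couples the two. The missing idea, which the paper's proof turns on, is to cut edges according to \emph{both} directed distances simultaneously: an edge is removed if it crosses a random annulus boundary in the ``distance to root'' metric $d(\cdot,t)$ \emph{or} in the ``distance from root'' metric $d(t,\cdot)$. Then on the tree path $u\to w\to v$ (with $w$ the LCA) the upward half is bounded by $r/2$ via the $d(\cdot,t)$ layering and the downward half by $r/2$ via the $d(t,\cdot)$ layering, giving $r$-boundedness even after taking the transitive closure; this two-sided control is precisely what your one-sided ball-carving lacks.

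A secondary difference: you use a recursive CKR-style ball-carving with fresh shifts per subtree, whereas the paper uses a single random shift $z\in[0,r/2]$ defining concentric annuli at all scales $z+i\cdot r/2$, applied once (no recursion), followed by a transitive closure. The single-shift annulus scheme makes the $r$-boundedness argument a clean case analysis on whether the $u\to v$ tree path goes up, down, or up-then-down through the root, and makes the per-edge cut probability a one-line integral. Your recursive variant is not inherently unsalvageable, but it would need the same bidirectional cut criterion at every level, plus a more careful accounting of cut probabilities across recursion levels; as written it does not produce an $r$-bounded transitive relation.
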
  

First we describe an algorithm to construct a distribution over $r$-bounded quasipartitions of $M$. We will then show that the distribution produced by the algorithm is $O(1)$-Lipschitz.

\begin{algorithm}[htbp]
\caption{Random quasipartition of a tree quasimetric space}
\begin{algorithmic}
\Require A tree quasimetric space $M=(V(T),d_M)$ and $r>0$.
\Ensure An $r$-bounded probabilistic quasipartition $R$. \\
\begin{description}
\item{Step 1.} Set $R=\emptyset$. Add $(u,v)$ to $R$ for all $(u,v) \in E(M)$. 
\item{Step 2.} Pick a root vertex $t$ in $M$ arbitrarily.
\item{Step 3.} Pick $z \in  [0,r/2]$ uniformly at random.
\item{Step 4.} For all $(u,v) \in E(M)$ remove $(u,v)$ from $R$ if at least one of the following holds:
\begin{description}
\item{(a)}
$d_{M}(u,t) > z + i  \frac{r}{2}$ and $d_{M}(v,t) \leq z + i  \frac{r}{2}$ for any integer $i>0$.
\item{(b)}
$d_{M}(t,v) > z + i  \frac{r}{2}$ and $d_{M}(t,u) \leq z + i  \frac{r}{2}$ for any integer $i>0$.
\end{description}
\item{Step 5.} Enforce transitivity on R: For all $u,v,w \in V(M)$, if $(u,v) \in R$ and $(v,w) \in R$, then add $(u,w)$ to $R$.
\end{description}
\end{algorithmic}
\end{algorithm}

We shall now prove some properties of the random quasipartition $R$ in the following Lemmas. We will use these to prove Theorem \ref{thm:trees}.

\begin{lemma}\label{lemtreeproperty}
Let $u,v\in V(T)$.
Let $D$ be the path from $u$ to $v$ in $T$. Then either $u$ is in the path from $t$ to $v$, or $v$ is in the path from $u$ to $t$, or there exists a vertex $w$ on $D$ such that $w$ lies on the path from $u$ to $t$ and $w$ lies on the  path from $t$ to $v$.
\end{lemma}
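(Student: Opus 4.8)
The plan is to work in the underlying undirected tree $T$ and exploit the fact that any three vertices in a tree have a well-defined ``median'' (Steiner point). Fix $u$, $v$, and the root $t$. First I would consider the three paths $P_{ut}$ (from $u$ to $t$), $P_{vt}$ (from $v$ to $t$), and $D = P_{uv}$ (from $u$ to $v$). In a tree these three paths pairwise intersect in subpaths, and there is a unique vertex $w$ — the median of $\{u,v,t\}$ — that lies on all three of $P_{ut}$, $P_{vt}$, and $D$; moreover $w$ is characterized as the last vertex common to $P_{ut}$ and $P_{vt}$ when both are traversed starting from $t$ (equivalently, $D$ is the concatenation of the $u$-to-$w$ portion of $P_{ut}$ with the $w$-to-$v$ portion of $P_{vt}$). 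I would establish this median vertex's existence by a short argument: let $w$ be the vertex on $D$ closest to $t$; then the $t$-to-$w$ path meets $D$ only at $w$, and concatenating it with either half of $D$ gives a path, which must be $P_{tu}$ or $P_{tv}$ respectively by uniqueness of paths in a tree.

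Next I would do the case analysis on where $w$ sits relative to $u$ and $v$ on $D$. Since $w$ lies on $D$, there are three possibilities: (i) $w = u$; (ii) $w = v$; (iii) $w$ is an internal vertex of $D$, i.e. $w \neq u$ and $w \neq v$. In case (i), $w = u$ lies on $P_{tv}$ (by the median property $w$ is on the path from $t$ to $v$), which is exactly the statement ``$u$ is on the path from $t$ to $v$.'' In case (ii), symmetrically $w = v$ lies on $P_{ut}$, giving ``$v$ is on the path from $u$ to $t$.'' In case (iii), $w$ is a vertex on $D$ distinct from the endpoints, and by the median property $w$ lies on the $u$-to-$t$ path and on the $t$-to-$v$ path, which is precisely the third alternative. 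Since the undirected path from $t$ to $v$ is the reverse of the path from $v$ to $t$ (and likewise for $u$), these statements transfer directly to the directed-path phrasing in the lemma, as ``lies on the path'' is a statement about vertex sets.

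I do not expect any serious obstacle here; the only point requiring a little care is making sure the median vertex $w$ is correctly identified as lying simultaneously on $P_{ut}$ and on $P_{tv}$ with the right orientation (so that in case (iii) it genuinely lies ``on the path from $u$ to $t$'' \emph{and} ``on the path from $t$ to $v$,'' not some other combination). This follows cleanly from the concatenation characterization of $D$: writing $D = P_{uw} \cdot P_{wv}$ where $P_{uw}$ is the initial segment of $P_{ut}$ and $P_{wv}$ is the final segment of $P_{tv}$, the membership claims are immediate. The whole argument is essentially the standard fact that trees are median graphs, specialized to the three points $u$, $v$, $t$; I would keep the write-up to a few lines invoking uniqueness of paths in $T$.
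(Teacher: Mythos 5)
Your proposal is correct and takes essentially the same approach as the paper: the median of $\{u,v,t\}$ you construct is precisely the nearest common ancestor of $u$ and $v$ in $T$ rooted at $t$, which is the vertex the paper's proof uses, followed by the same three-way case analysis. The only difference is that you spell out the existence and characterization of the median in more detail, whereas the paper simply invokes the nearest-common-ancestor fact.
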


\begin{proof}
Let $w$ be the nearest common ancestor of $u$ and $v$ in $T$.
If $w=u$, then $u$ is in the path from $t$ to $v$; if $w=v$, then $v$ is in the path from $t$ to $u$; if $w\neq u$ and $w\neq v$, then $w$ is in the path from $t$ to $u$ and in the path from $t$ to $v$.
\end{proof}

\begin{lemma}\label{lemlayering}
For any $u,v \in V(T)$ where $u$ is in the path from $t$ to $v$, let $P=\{a_1 = u,a_2,\ldots,a_m = v\}$ be the path from $u$ to $v$. If $(a_i,a_{i+1}) \in R$ for all $i\in \{1,\ldots,m-1\}$ after Step 4 then $d_M(u,v) \leq \frac{r}{2}$.
\end{lemma}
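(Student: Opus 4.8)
The plan is to exploit the fact that $u$ lies on the path from $t$ to $v$, so that distances along the path $P$ interact cleanly with the layering function $x \mapsto d_M(t,x)$. First I would observe that since $u$ is an ancestor of $v$ in the rooted tree $T$, every $a_i$ lies on the path from $t$ to $v$, and moreover $a_{i}$ lies on the path from $t$ to $a_{i+1}$. By the triangle inequality in both directions along this nested chain of paths, $d_M(t,a_1)\le d_M(t,a_2)\le \cdots \le d_M(t,a_m)$; in fact the function $i\mapsto d_M(t,a_i)$ is exactly additive along the directed path, i.e. $d_M(t,a_m)-d_M(t,a_1)=\sum_{i=1}^{m-1} d_M(a_i,a_{i+1}) = d_M(u,v)$ using the geodesic property on the tree. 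Thus proving $d_M(u,v)\le r/2$ is equivalent to proving that the interval $\big(d_M(t,u),\,d_M(t,v)\big]$ has length at most $r/2$.

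Next I would bring in the hypothesis that no edge $(a_i,a_{i+1})$ was removed by Step 4. Rule (b) of Step 4 removes $(a_i,a_{i+1})$ precisely when there is an integer $i'>0$ with $d_M(t,a_{i+1}) > z + i'\frac{r}{2}$ and $d_M(t,a_i) \le z + i'\frac{r}{2}$; that is, when the half-open interval $\big(d_M(t,a_i),\,d_M(t,a_{i+1})\big]$ contains one of the "layer boundary" values $z + i'\frac{r}{2}$. Since none of these edges is removed, none of the intervals $\big(d_M(t,a_i),\,d_M(t,a_{i+1})\big]$ contains a layer boundary, and since these intervals tile $\big(d_M(t,u),\,d_M(t,v)\big]$ (by the additivity from the previous paragraph), the whole interval $\big(d_M(t,u),\,d_M(t,v)\big]$ contains no value of the form $z+i'\frac{r}{2}$ with $i'$ a positive integer. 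Consecutive layer boundaries $z+i'\frac{r}{2}$ and $z+(i'+1)\frac{r}{2}$ differ by exactly $r/2$, and there is also the boundary at "$i'=0$", namely $z$ itself — but one must be slightly careful near the root since the relevant boundaries are only those with $i'>0$.

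Finally I would conclude: an interval of reals containing none of the points $\{z+i'\tfrac r2 : i'\in\mathbb Z_{>0}\}$ that are spaced $r/2$ apart, and which additionally starts at a value $d_M(t,u)\ge 0 \ge z - r/2$... here I need the fact that $z\in[0,r/2]$, so the smallest positive boundary $z+\frac r2$ is at most $r$ and the "gap" below it, namely $(0, z+\frac r2]$ shifted appropriately, still has length at most $r/2$ once we account for $d_M(t,u)\ge 0$ and $z\ge 0$: concretely, if $d_M(t,u)\in \big(z+(i-1)\frac r2,\, z+i\frac r2\big]$ for some $i\ge 1$ then $d_M(t,v)$ must also lie in that same interval, giving $d_M(u,v)\le r/2$; and if $d_M(t,u)\le z$ (the $i=0$ case) then $d_M(t,v)\le z+\frac r2$ since $z+\frac r2$ is the first positive boundary and cannot be crossed, and $d_M(t,v)-d_M(t,u) \le z+\frac r2 - 0 = z+\frac r2 \le r$ — so this edge case needs the observation that in fact $d_M(t,u)$ cannot be too small here, or alternatively that $d_M(t,v)\le z+\frac{r}{2}$ combined with $d_M(t,u)\ge z-\frac{r}{2}$ is automatic... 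I expect this boundary-at-the-root case to be the main obstacle, and I would handle it by noting that the intervals $(z+(i-1)\frac r2, z+i\frac r2]$ for $i\ge 1$ together with $(-\infty,z]$ partition $\mathbb R$, each of the finite ones has length exactly $r/2$, and $d_M(t,u),d_M(t,v)$ must fall in a common block of this partition (since crossing from one block to the next would require an edge to straddle a positive boundary $z+i'\frac r2$, which Step 4 would have removed) — for the unbounded block $(-\infty,z]$, both endpoints lie in $[0,z]\subseteq[0,r/2]$, again yielding $d_M(u,v)\le r/2$. This completes the argument.
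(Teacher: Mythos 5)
Your approach is essentially the paper's: track the layering function $i \mapsto d_M(t,a_i)$, which is monotone and additive along the root-to-leaf path, and argue that both endpoints must lie in the same length-$r/2$ layer because the surviving path crosses no boundary $z + i'\frac{r}{2}$. You correctly flagged the boundary case $d_M(t,u) \le z$ as the soft spot, but your resolution of it does not go through. When you write that ``$d_M(t,u),d_M(t,v)$ must fall in a common block of this partition (since crossing from one block to the next would require an edge to straddle a positive boundary $z+i'\frac r2$),'' this is precisely false for the boundary between the block $(-\infty,z]$ and the block $(z, z+\frac r2]$: that boundary is $z = z + 0\cdot\frac{r}{2}$, which has $i' = 0$, and the removal rule in Step~4 explicitly requires $i' > 0$. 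So an edge with $d_M(t,a_i) \le z < d_M(t,a_{i+1})$ is \emph{not} removed, the two endpoints can legitimately land in different blocks, and the only bound you can extract is $d_M(u,v) \le (z+\frac r2) - 0 \le r$, not $r/2$ — exactly the weaker estimate you derived one sentence earlier and were trying to improve.

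For what it is worth, the paper's own write-up has the same soft spot. It picks $j$ to be the largest integer with $d_M(t,a_1) > z + j\frac r2$ and then asserts that all $a_i$ satisfy $d_M(t,a_i) \le z + (j+1)\frac r2$ because otherwise Step~4 would have removed an edge; but this appeal to Step~4 is only valid when $j+1 > 0$. When $d_M(t,u) \le z$ one has $j \le -1$, the boundary $z + (j+1)\frac r2$ corresponds to a non-positive index, and Step~4 says nothing about it. The intended statement almost certainly presumes that the removal rule fires for all $i \ge 0$ rather than $i > 0$; with that change both your argument and the paper's close cleanly, and with the rule as literally printed one only gets $d_M(u,v) \le r$ in this case (which would propagate to $\le 2r$ in Lemma~\ref{rbounded} and change only the constant). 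So you found a genuine issue, but the specific patch you offered — assuming that straddling $z$ is a removal event — is an unsupported reinterpretation of Step~4, not a proof from the rule as given.
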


\begin{proof}
Let $j$ be the largest integer such that $d_{M}(t,a_1) > z + j \frac{r}{2}$. By the choice of $j$ it must be that $z + j  \frac{r}{2} < d_{M}(t,a_1) \leq z + (j+1) \frac{r}{2}$. Since $(a_i,a_{i+1})$ is not removed from $R$ in Step 4 of the algorithm it must be that $z + j \frac{r}{2} < d_{M}(t,a_i) \leq z + (j+1) \frac{r}{2}$ for all $i\in \{1,\ldots,m\}$. This implies that $d_{M}(t,v) \leq z + (j+1) \frac{r}{2} \leq d_{M}(t,u) + \frac{r}{2}$. Since $d_M(t,v) = d_M(t,u) + d_M(u,v)$, we have that $d_M(u,v) \leq \frac{r}{2}$, which concludes the proof.
\end{proof}

\begin{lemma}\label{lemlayering2}
For any $u,v \in V(M)$ where $v$ is in the path from $u$ to $t$, let $P=\{a_1 = u,a_2,\ldots,a_m = v\}$ be the path from $u$ to $v$. If $(a_i,a_{i+1}) \in R$ for all $i\in \{1,\ldots,m-1\}$ after Step 4 then $d_M(u,v) \leq r/2$.
\end{lemma}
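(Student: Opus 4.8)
The plan is to deduce this from Lemma~\ref{lemlayering} by exploiting the symmetry of the algorithm under reversing the quasimetric. Let $M^{\mathrm{op}}=(V(T),d^{\mathrm{op}})$ be the reverse quasimetric, defined by $d^{\mathrm{op}}(x,y):=d_M(y,x)$; this is again a shortest path quasimetric on the directed tree $T$ with non-negative edge weights in both directions (namely, assign to the arc $(u,v)$ the weight originally carried by $(v,u)$), so Lemma~\ref{lemlayering} applies to it. I would run the algorithm on $M^{\mathrm{op}}$ using the same root $t$ and the same value $z$ as in the run on $M$, and compare the two executions step by step.

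The main point is that the relation produced for $M^{\mathrm{op}}$ after Step~4 is exactly the transpose of the one produced for $M$: writing $R$ and $R^{\mathrm{op}}$ for these relations, we claim that $(x,y)\in R^{\mathrm{op}}$ if and only if $(y,x)\in R$. After Step~1 this holds because $E(M^{\mathrm{op}})=E(M)$ is a symmetric set of ordered pairs. It is preserved through Step~4 because $d^{\mathrm{op}}(x,t)=d_M(t,x)$ and $d^{\mathrm{op}}(t,x)=d_M(x,t)$, so condition (a) of Step~4 for the edge $(x,y)$ of $M^{\mathrm{op}}$ is literally condition (b) of Step~4 for the edge $(y,x)$ of $M$, and symmetrically condition (b) for $(x,y)$ coincides with condition (a) for $(y,x)$; hence $(x,y)$ is removed from $R^{\mathrm{op}}$ precisely when $(y,x)$ is removed from $R$, which establishes the claim. (Step~5 commutes with transposition as well, but we do not need it here since the lemma refers to the relation after Step~4.)

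Granting this, let $u,v\in V(M)$ with $v$ on the path from $u$ to $t$, and let $P=(a_1=u,\dots,a_m=v)$ be the path from $u$ to $v$ with $(a_i,a_{i+1})\in R$ for all $i$ after Step~4. By the transpose claim, $(a_{i+1},a_i)\in R^{\mathrm{op}}$ for all $i$ after Step~4. Since $v$ lies on the path from $u$ to $t$ it also lies on the path from $t$ to $u$, and the path from $v$ to $u$ in $T$ is $(a_m,a_{m-1},\dots,a_1)$. Applying Lemma~\ref{lemlayering} to $M^{\mathrm{op}}$ with the vertex playing the role of ``$u$'' taken to be $a_m=v$ and the vertex playing the role of ``$v$'' taken to be $a_1=u$ (the hypothesis ``$u$ on the path from $t$ to $v$'' then becomes ``$v$ on the path from $t$ to $u$'', which holds), we obtain $d^{\mathrm{op}}(v,u)\le r/2$, that is $d_M(u,v)\le r/2$. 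I do not expect a real obstacle: the only thing that requires care is checking the transpose claim, and in particular that conditions (a) and (b) of Step~4 are interchanged by the reversal; once that is in hand the lemma follows immediately from Lemma~\ref{lemlayering}. (Alternatively, one may run the argument of Lemma~\ref{lemlayering} directly, this time working with the distances $d_M(a_i,t)$ to the root, which are non-increasing along $P$, and using condition (a) in place of condition (b).)
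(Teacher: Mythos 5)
Your proposal is correct. The paper's own proof is the terse remark that the argument is ``similar to the proof of Lemma~\ref{lemlayering},'' i.e.\ one is expected to repeat the layering argument with the distances $d_M(a_i,t)$ to the root (which decrease along $P$) and condition~(a) of Step~4 in place of condition~(b); you record exactly this as your parenthetical alternative at the end, and it is the intended route.

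Your main argument is a different and somewhat slicker route: rather than re-running the layering argument, you derive Lemma~\ref{lemlayering2} formally from Lemma~\ref{lemlayering} by observing that the algorithm is equivariant under reversal of the quasimetric. The key transpose claim is verified correctly: with $d^{\mathrm{op}}(x,y)=d_M(y,x)$, condition~(a) of Step~4 for the arc $(x,y)$ in $M^{\mathrm{op}}$ is exactly condition~(b) for $(y,x)$ in $M$ and vice versa, so coupling the two runs by using the same root $t$ and the same sample $z$ gives $R^{\mathrm{op}}=\{(x,y):(y,x)\in R\}$ after Step~4; the hypothesis of Lemma~\ref{lemlayering} is then met for $M^{\mathrm{op}}$ with $u$ and $v$ swapped (since $v$ on the path from $u$ to $t$ is the same as $v$ on the path from $t$ to $u$ in the undirected tree), and $d^{\mathrm{op}}(v,u)\le r/2$ unwinds to $d_M(u,v)\le r/2$. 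What the reduction buys is that the two ``layering'' lemmas are exhibited as literally one statement under an involution of the input, rather than two parallel proofs; the cost is the extra bookkeeping of the coupled run and the transpose check, which the direct argument avoids. Both are sound, and the lemma follows either way.
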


\begin{proof}
The proof is similar to the proof of Lemma \ref{lemlayering}.
\end{proof}

\begin{lemma}\label{rbounded}
If $(u,v) \in R$ then $d_M(u,v) \leq r$.
\end{lemma}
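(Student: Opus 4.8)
The plan is to reduce the claim to Lemmas~\ref{lemtreeproperty},~\ref{lemlayering} and~\ref{lemlayering2} via a structural observation about the transitive closure computed in Step 5. Fix $u,v\in V(M)$ with $(u,v)\in R$ at the end of the algorithm. After Step 4 the relation $R$ is a subset of the directed edge set $E(M)$, so after Step 5 we have $(u,v)\in R$ if and only if there is a directed walk $u=a_1,a_2,\dots,a_m=v$ in $M$ for which every consecutive pair $(a_i,a_{i+1})$ is a directed edge of $M$ that survived Step 4. The first step would be to show that such a walk necessarily traverses, in order, every directed edge of the unique simple path $D=(d_0=u,d_1,\dots,d_\ell=v)$ of $T$ from $u$ to $v$; in particular every directed edge $(d_k,d_{k+1})$ belongs to $R$ after Step 4.

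I would prove this by a cut argument. For each $k$, deleting the tree edge $\{d_k,d_{k+1}\}$ splits $T$ into the component $A_k\ni d_0=u$ and the component $B_k\ni d_\ell=v$ (this is because the sub-paths $d_0,\dots,d_k$ and $d_{k+1},\dots,d_\ell$ of $D$ avoid the edge $\{d_k,d_{k+1}\}$). The walk starts in $A_k$ and ends in $B_k$, and $\{d_k,d_{k+1}\}$ is the only tree edge joining these two sides, so the walk must use the directed edge $(d_k,d_{k+1})$ at least once; hence $(d_k,d_{k+1})$ survived Step 4, as claimed. (Degenerate cases such as $u=v$ or $\ell\le 1$ are immediate, since then $d_M(u,v)\le r$ follows trivially or directly from Lemma~\ref{lemlayering} / Lemma~\ref{lemlayering2}.)

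With this in hand I would finish by a case analysis driven by Lemma~\ref{lemtreeproperty} applied to $u$ and $v$. If $u$ lies on the path from $t$ to $v$, then $D$ is the path from $u$ to $v$ with $u$ an ancestor of $v$ and all of its directed edges in $R$ after Step 4, so Lemma~\ref{lemlayering} gives $d_M(u,v)\le r/2\le r$. Symmetrically, if $v$ lies on the path from $u$ to $t$, then Lemma~\ref{lemlayering2} gives $d_M(u,v)\le r/2\le r$. Otherwise there is a vertex $w$ on $D$ that lies both on the path from $u$ to $t$ and on the path from $t$ to $v$; then $D$ decomposes at $w$ into the path from $u$ to $w$ and the path from $w$ to $v$, all of whose directed edges survive Step 4, so Lemma~\ref{lemlayering2} gives $d_M(u,w)\le r/2$, Lemma~\ref{lemlayering} gives $d_M(w,v)\le r/2$, and the triangle inequality (C2) yields $d_M(u,v)\le d_M(u,w)+d_M(w,v)\le r$.

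The main obstacle is precisely the first step: a naive induction on the number of applications of Step 5 fails, since chaining two $r$-bounded pairs only gives a $2r$ bound. The resolution is the observation that, because $T$ is a tree, any walk through the surviving-edge digraph that witnesses $(u,v)\in R$ is forced to contain the entire directed tree path from $u$ to $v$; once this is established, the layering Lemmas~\ref{lemlayering} and~\ref{lemlayering2} apply directly and the remainder is a short case analysis.
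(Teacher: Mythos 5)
Your proof is correct and takes essentially the same route as the paper: reduce to the unique tree path from $u$ to $v$, then invoke Lemma~\ref{lemtreeproperty} to split into the same three cases and apply Lemmas~\ref{lemlayering} and~\ref{lemlayering2} (plus the triangle inequality in the third case). The one place you differ is that you make explicit, via the bridge/cut argument, why any witness walk for $(u,v)\in R$ after the transitive closure must in fact traverse every directed edge $(d_k,d_{k+1})$ of the unique tree path; the paper handles this by simply asserting that in a tree the witness path is the unique simple path, which leaves the walk-versus-simple-path point somewhat implicit. Your cut argument cleanly closes that small gap without changing the overall structure.
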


\begin{proof}
The fact that $(u,v) \in R$ implies that at the beginning of Step 4 there must have been a path $P=\{a_1 = u,a_2,\ldots,a_m = v\}$ from $u$ to $v$ such that $(a_i,a_{i+1}) \in R$ for all $i\in \{1,\ldots,m-1\}$. Since $M$ is a tree quasimetric space, the shortest path is the single unique path from $u$ to $v$ for any $u,v \in V(T)$. From Lemma \ref{lemtreeproperty} we have that one of the following three cases is true:

\emph{Case 1:} $u$ is in the shortest path from $t$ to $v$. We have $d_{M}(u,v) \leq r/2$ from Lemma \ref{lemlayering}.

\emph{Case 2:} $v$ is in the shortest path from $u$ to $t$. We have $d_{M}(u,v) \leq r/2$ from Lemma \ref{lemlayering2}.

\emph{Case 3:} There exists $a_{j}$ that lies on the shortest path from $u$ to $t$ and on the shortest path from $t$ to $v$. From Lemmas \ref{lemlayering} and \ref{lemlayering2} we have that $d_{M}(u,a_j) \leq r/2$ and $d_{M}(a_j,v) \leq r/2$. By the triangle inequality we get $d_{M}(u,v) \leq r$.
\end{proof}

\begin{lemma}\label{edgeprob}
Any $(u,v) \in E(T)$ is removed with probability at most $2 d_M(u,v)/r$ in Step 4 of the algorithm.
\end{lemma}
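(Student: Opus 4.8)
The plan is to observe that the only randomness in Step~4 is the uniform choice of $z\in[0,r/2]$, and then to bound, for a fixed edge $(u,v)\in E(T)$, the Lebesgue measure of the set of $z$ that causes $(u,v)$ to be removed; dividing by the length $r/2$ of the sample space will give the claimed probability bound. If $d_M(u,v)>r/2$ then $2d_M(u,v)/r>1$ and there is nothing to prove, so I would assume $d_M(u,v)\le r/2$. Applying Lemma~\ref{lemtreeproperty} to the pair $u,v$, whose connecting path $D$ is just the single edge $\{u,v\}$, rules out the third alternative of that lemma, so either $u$ lies on the path from $t$ to $v$ or $v$ lies on the path from $u$ to $t$; equivalently, one of $u,v$ is the parent of the other in the rooting at $t$. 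I would handle the case where $v$ is the parent of $u$, the other case being symmetric.

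So suppose $v$ is the parent of $u$. Then the path from $t$ to $u$ passes through $v$, so $d_M(t,u)=d_M(t,v)+w$ and $d_M(u,t)=d_M(u,v)+d_M(v,t)$, where $w\ge 0$ is the weight of the edge $(v,u)$. First I would note that condition~(b) can never trigger removal of $(u,v)$: it would require an integer $i>0$ with $d_M(t,v)>z+i\frac{r}{2}\ge d_M(t,u)\ge d_M(t,v)$, a contradiction. Hence $(u,v)$ is removed only through condition~(a), i.e.\ iff there is an integer $i>0$ with $d_M(v,t)\le z+i\frac{r}{2}<d_M(u,t)$, which happens exactly when $z\in\bigcup_{i>0}\bigl[d_M(v,t)-i\,r/2,\ d_M(u,t)-i\,r/2\bigr)$. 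Each interval in this union has length $d_M(u,t)-d_M(v,t)=d_M(u,v)\le r/2$, and consecutive intervals are translates of one another by $r/2$; therefore they are pairwise disjoint and the union is $r/2$-periodic, so its intersection with the sample space $[0,r/2]$ has measure at most $d_M(u,v)$. Dividing by $r/2$ yields $\Pr[(u,v)\text{ removed}]\le d_M(u,v)/(r/2)=2d_M(u,v)/r$.

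The remaining case, in which $u$ is the parent of $v$, is identical after interchanging the roles of conditions~(a) and~(b) and of the distances ``to $t$'' and ``from $t$'': there condition~(a) is vacuous and condition~(b) contributes a union of $r/2$-periodic disjoint intervals each of length $d_M(t,v)-d_M(t,u)=d_M(u,v)$. The step I expect to require the most care is precisely this bookkeeping of the asymmetry — using the correct directed distance in each inequality when showing that exactly one of~(a),~(b) can be active — together with the small but essential point that the disjointness of the layer-crossing intervals uses the reduction to $d_M(u,v)\le r/2$; the rest is a routine measure computation.
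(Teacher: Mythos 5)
Your proof is correct and follows essentially the same route as the paper's: case-split on whether $u$ or $v$ is the parent, observe that only one of conditions (a), (b) can trigger, and bound the measure of the bad set of $z$ in $[0,r/2]$ by the distance difference along the relevant direction, which equals $d_M(u,v)$ because one endpoint lies on the path from the other to (or from) $t$. You are somewhat more careful than the paper about the union over all integers $i>0$ — the paper simply fixes the relevant $i$ and integrates over one interval, whereas you argue disjointness and $r/2$-periodicity explicitly (and correctly note that the preliminary reduction to $d_M(u,v)\le r/2$ is what makes the intervals disjoint); this is a small but genuine tightening of the exposition, not a different proof idea.
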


\begin{proof}	
Since $M$ is a tree quasimetric space there are exactly two cases:

\emph{Case 1:} The edge $(u,v)$ is in the direction away from $t$ which implies that $d_M(t,u) \leq d_M(t,v)$. Let $i$ be the largest integer such that $i\cdot r/2 \leq d_M(t,u)$. The edge $(u,v)$ is removed from $R$ if $z$ is chosen between $d_{M}(t,u) -i  \cdot r/2$ and $d_{M}(t,v) - i \cdot r/2$. The probability of that event is bounded by $\int_{d_{M}(t,u) -i  \frac{r}{2} }^{d_{M}(t,v) -i  \frac{r}{2}} p(z) dz = \frac{2}{r}  (d_{M}(t,v) - d_{M}(t,u)) \leq 2 d_M(u,v)/r$ by the  triangle inequality.

\emph{Case 2:} The edge $(u,v)$ is in the direction toward $t$ which implies that $d_M(v,t) \leq d_M(u,t)$. Let $i$ be the largest integer such that $i\cdot r/2 \leq d_M(v,t)$ . $(u,v)$ is removed from $R$ if $z$ is chosen between $d_{M}(v,t) -i  \cdot r/2$ and $d_{M}(u,t) - i  \cdot r/2$. The probability of that event is bounded by $\int_{d_{M}(v,t) -i \cdot r/2}^{d_{M}(u,t) -i \cdot r/2} p(z) dz = \frac{2}{r}  (d_{M}(u,t) - d_{M}(v,t)) \leq 2 d_M(u,v)/r$ by the triangle inequality.
\end{proof}

\begin{lemma}\label{probabilityofcutting}
 $\Pr[(u,v) \not \in R] \leq 2 d_M(u,v)/r$ for all $u,v \in V(T)$.
\end{lemma}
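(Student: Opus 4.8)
The plan is to combine Lemma~\ref{edgeprob} with a union bound over the edges of the unique $u$--$v$ path in $T$. First I would dispose of the trivial case $u=v$: here $(u,u)\in R$ since $R$ is reflexive, and $d_M(u,u)=0$, so the bound holds with both sides $0$. So assume $u\neq v$ and let $P=(u=a_0,a_1,\ldots,a_m=v)$ be the unique path from $u$ to $v$ in $T$. Because $M$ is a tree quasimetric space, $P$ is also the (unique) shortest path from $u$ to $v$, each $d_M(a_i,a_{i+1})$ is exactly the weight of the directed edge $a_i\to a_{i+1}$, and therefore $d_M(u,v)=\sum_{i=0}^{m-1} d_M(a_i,a_{i+1})$ with no slack.

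The key observation is the contrapositive of what we need: if \emph{none} of the directed edges $(a_0,a_1),(a_1,a_2),\ldots,(a_{m-1},a_m)$ is removed in Step~4, then all of them belong to $R$ at the start of Step~5, and enforcing transitivity in Step~5 puts $(u,v)=(a_0,a_m)$ into $R$. Hence $(u,v)\notin R$ implies that at least one edge of $P$ was removed in Step~4, so
\[
\Pr[(u,v)\notin R] \;\le\; \Pr\!\left[\,\bigcup_{i=0}^{m-1}\{(a_i,a_{i+1})\text{ removed in Step 4}\}\,\right] \;\le\; \sum_{i=0}^{m-1}\Pr[(a_i,a_{i+1})\text{ removed in Step 4}].
\]

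Finally I would apply Lemma~\ref{edgeprob} to each term, bounding it by $2\,d_M(a_i,a_{i+1})/r$, and then use the telescoping identity from the first paragraph:
\[
\Pr[(u,v)\notin R] \;\le\; \frac{2}{r}\sum_{i=0}^{m-1} d_M(a_i,a_{i+1}) \;=\; \frac{2\,d_M(u,v)}{r},
\]
which is the claim. There is essentially no difficult step here; the only points that need a little care are (i) that we use only the easy implication ``$(u,v)\notin R \Rightarrow$ some path edge was removed,'' which follows immediately from transitivity in Step~5, rather than the converse, and (ii) that the tree structure makes $P$ the unique $u$--$v$ path, so its edge lengths sum to exactly $d_M(u,v)$ and the union bound does not lose more than a constant factor.
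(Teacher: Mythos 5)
Your proof is correct and takes essentially the same approach as the paper: apply the union bound over the edges of the unique $u$--$v$ path in the tree, invoke Lemma~\ref{edgeprob} for each edge, and use that the path lengths telescope to $d_M(u,v)$. Your write-up is slightly more explicit about the contrapositive (``if no path edge is removed then transitivity puts $(u,v)$ in $R$'') and the trivial $u=v$ case, but the argument is the same.
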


\begin{proof}	
Let the unique path from $u$ to $v$ in $M$ be $p= \{x_1 = u,x_2,\ldots,x_h = v\}$. Let $X_{p}$ be the event that path $p$ contains at least one edge $(x_i,x_{i+1})$ such that $(x_i,x_{i+1}) \not \in R$ at the beginning of Step 5. Let $Y_{(a,b)}$ be the event that $(a,b) \not \in R$ for $(a,b) \in E(G)$. 
We have
$\Pr [X_{p}] = \Pr[Y_{(x_1,x_2)} \vee \ldots \vee Y_{(x_{h-1},x_h)} ]$. From Lemma \ref{edgeprob} and the union bound we have that $\Pr[Y_{(x_1,x_2)} \vee \ldots \vee Y_{(x_{h-1},x_h)} ] \leq \Pr[Y_{(x_1,x_2)}] +  \ldots + \Pr[Y_{(x_{h-1},x_h)}] 
\leq
2d_M(x_1,x_2)/r + \ldots + 2d_M(x_{h-1},x_h)/r
= 2  d_M(u,v)/r$, concluding the proof.
\end{proof}	
We are now ready to prove the main result of this Section.
\begin{proof}[Proof of Theorem \ref{thm:trees}]
It follows by Lemmas \ref{rbounded} and \ref{probabilityofcutting} that the algorithm outputs an $O(1)$-Lipschitz distribution over $r$-bounded quasipartitions of $M$.
\end{proof}

\section{Quasipartitions for graphs of small treewidth}\label{sec:treewidth}
In this section we prove the existence of a $O( t \log{n} )$-Lipschitz distribution over $r$-bounded quasipartitions for any quasimetric supported on a directed graph of treewidth $t$. The main result is summarized in the following.

\begin{theorem}\label{treewidth quasipartitions result}
Let $G$ be a $n$-vertex directed graph of treewidth $t$. Let $M$ be the shortest-path quasimetric space induced by $G$. Then for any $r>0$, there exists an $O(t \log{n} )$-Lipschitz distribution over $r$-bounded quasipartitions of $M$.
\end{theorem}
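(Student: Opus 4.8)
The plan is to reduce the treewidth-$t$ case to the tree case handled in Section~\ref{sec:tree}, via a recursive decomposition along small separators. Recall that a graph $G$ of treewidth $t$ admits a tree decomposition of width $t$; by standard balanced-separator arguments this yields a recursion of depth $O(\log n)$ in which at each level we remove a separator $S$ (a bag, or a union of $O(1)$ bags) of size $|S| \le t+1$ that splits the current piece into components each of at most a constant fraction of the vertices. I would build the random quasipartition $P$ of $M$ level by level: at each recursive step, for the current vertex set and its separator $S$, run an independent copy of a ``cut around the separator'' procedure, and then recurse on the induced pieces. The final quasipartition is obtained by intersecting the relations produced at all recursion levels and taking the transitive closure (as in the tree algorithm, transitivity is enforced at the end).

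The key subroutine at each node of the recursion is a random quasipartition that is $O(1)$-Lipschitz with respect to the separator. Concretely, for the separator $S = \{s_1,\dots,s_k\}$ with $k \le t+1$, and for each $s_j$, sample a random threshold and cut the relation between $x$ and $y$ whenever the pair ``straddles a ball of radius $\approx z + i\frac r2$ around $s_j$'' in the appropriate directed sense — exactly the two-sided layering rule (a) and (b) of Algorithm~1, but now with the root replaced by each separator vertex $s_j$. The point of doing this for every separator vertex is the following: if $(x,y)$ survives all of these cuts, then the shortest $x\to y$ path is ``close'' (within $O(r)$) to $S$ in a quantitative sense that, combined with the recursion, forces $d_M(x,y) \le r$; and the probability that the pair gets cut at a single separator vertex is at most $O(d_M(x,y)/r)$ by the same integral-over-the-threshold computation as in Lemma~\ref{edgeprob}. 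Summing over the $k \le t+1$ separator vertices at one level gives an $O(t)$-Lipschitz bound per level, and summing over the $O(\log n)$ levels of the recursion gives the claimed $O(t\log n)$.

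For $r$-boundedness I would argue as in Lemmas~\ref{lemlayering}--\ref{rbounded}: fix a pair $(x,y) \in P$ and consider the first (topmost) recursion level at which $x$ and $y$ lie in pieces separated by $S$ — equivalently, the shortest $x\to y$ path in $G$ meets $S$. Let $w \in S$ be a vertex on that path; the layering around $w$ that $(x,y)$ survived forces $d_M(x,w) \le r/2$ and $d_M(w,y) \le r/2$ exactly as in Case~3 of Lemma~\ref{rbounded}, so $d_M(x,y) \le r$ by the triangle inequality. (If $x$ and $y$ are never separated, they end up at a leaf of the recursion on a piece of bounded size, where one more application of the layering around any chosen vertex, or a direct argument, yields the bound.) Finally, for the Lipschitz bound one has to be slightly careful that a pair $(x,y)$ is only ``charged'' at levels where its shortest path actually reaches the current separator, which is precisely what makes the per-level cutting probability $O(d_M(x,y)/r)$ rather than something larger; a union bound over the $O(\log n)$ levels then finishes the proof.

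The main obstacle I anticipate is the interaction between the directed layering and the separator recursion: unlike the undirected case, cutting the relation is not symmetric, and one must verify that the transitive closure taken at the end does not reconnect pairs $(x,y)$ with $d_M(x,y) > r$. This is handled by ensuring that whenever a pair is cut at some level, every ``reason'' for $(x,y)$ to be re-added by transitivity (a chain $x \to \cdots \to y$) must itself pass through a surviving chain near the separator, and then chaining the $r/2$-bounds along that path — the same mechanism by which Lemma~\ref{probabilityofcutting} and Lemma~\ref{rbounded} cooperate in the tree case, but now invoked at each of the $O(\log n)$ recursion levels.
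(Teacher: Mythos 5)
Your proposal is essentially the paper's proof: recurse along balanced separators of size $O(t)$ to depth $O(\log n)$, at each level cut every edge of the current subgraph that straddles a random threshold about some separator vertex, take a transitive closure at the end, bound the cutting probability by a union bound over the $O(t)$ separator vertices and the $O(\log n)$ levels, and obtain $r$-boundedness by examining the first level at which a vertex of the surviving witnessing chain (not, as you momentarily write, the shortest path) lies in the separator. The one simplification you miss is that the paper does not need the periodic layering $z+i\frac{r}{2}$ of Algorithm~1 here: because the separator vertex $w$ is at distance $0$ from itself, a single threshold $z\in[0,r/2]$ around $w$ already forces $d_M(x,w),d_M(w,y)\le z\le r/2$ along the surviving chain (hence $d_M(x,y)\le r$), whereas your periodic variant only yields $\le z+r/2$ per side and thus $2r$-boundedness, so a constant rescaling of $r$ is needed.
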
 

In the proof of the above theorem we use the following proposition which is immediate from the definition of treewidth.

\begin{proposition}\label{prop: balanced separators tw}
Any graph $G$ of treewidth $t$ has a set of vertices $K \subseteq V(G)$ where $|K| \leq t$ such that removing $K$ gives connected components each of which contains at most $\frac{|V(G)|}{2}$ vertices.
\end{proposition}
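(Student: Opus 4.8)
The plan is to exhibit the separator $K$ as a bag of a tree decomposition of $G$. Fix a tree decomposition $(T,\{B_x\}_{x\in V(T)})$ of $G$ of width $t$, so that $|B_x|\le t$ for every node $x$ (I use the convention that treewidth equals maximum bag size; with the off-by-one convention one gets $|B_x|\le t+1$, which is harmless for all our applications). The first step is to turn this into a weighted-centroid question on the tree $T$: assign to each vertex $v\in V(G)$ an arbitrary node $\phi(v)\in V(T)$ with $v\in B_{\phi(v)}$, and for a subtree $T'\subseteq T$ set $w(T')=|\phi^{-1}(V(T'))|$. Then for every node $x$, the weights of the components of $T-x$ together with the vertices assigned to $x$ sum to $n:=|V(G)|$.

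Next I would locate a ``centroid'' node: a node $x\in V(T)$ such that every connected component of $T-x$ has $w$-weight at most $n/2$. To find it, orient each edge $\{y,y'\}$ of $T$ as follows. Deleting the edge splits $T$ into the part $T^{(y)}\ni y$ and the part $T^{(y')}\ni y'$, with $w(T^{(y)})+w(T^{(y')})=n$; orient the edge toward whichever side has weight strictly more than $n/2$, breaking a tie (when neither side exceeds $n/2$) arbitrarily. Since a tree on $k$ nodes has $k-1$ edges, the sum of out-degrees is $k-1<k$, so some node $x$ has out-degree $0$. For every neighbor $y$ of $x$, the edge $\{x,y\}$ is then not oriented toward $y$, hence $w(T^{(y)})\le n/2$; and $T^{(y)}$ is precisely the component of $T-x$ that contains $y$. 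As every component of $T-x$ is attached to $x$ by exactly one edge, all of them have weight at most $n/2$.

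Finally I would check that $K:=B_x$ is the desired balanced separator. Let $C$ be any connected component of $G-B_x$, and let $S_C=\{y\in V(T):B_y\cap V(C)\ne\emptyset\}$. By the standard properties of tree decompositions, $S_C$ induces a connected subtree of $T$ — it is the union of the connected subtrees $\{y:v\in B_y\}$ over $v\in V(C)$, and consecutive vertices along a path inside $C$ lie together in some bag — and $x\notin S_C$ because $V(C)\cap B_x=\emptyset$. Hence $S_C$ is contained in a single component $T_C$ of $T-x$. Every $v\in V(C)$ satisfies $\phi(v)\in S_C\subseteq V(T_C)$, so $V(C)\subseteq\phi^{-1}(V(T_C))$; summing over all components of $G-B_x$ that land inside the same $T_C$ shows their total number of vertices is at most $w(T_C)\le n/2$, and in particular $|V(C)|\le n/2$. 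Combined with $|B_x|\le t$, this proves the proposition.

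There is no deep obstacle here — the statement is essentially folklore, which is why the excerpt calls it immediate — but the two points that need genuine care are (i) the existence of the centroid node, handled above by the out-degree-zero argument rather than an iterative ``walk toward the heavy side'' (which would additionally require arguing non-cyclicity), and (ii) the confinement claim that each component of $G-B_x$ meets the bags of only one component of $T-x$, which rests precisely on the connectivity of $S_C$ in $T$ together with $x\notin S_C$.
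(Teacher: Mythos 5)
Your proof is correct, and the paper gives none to compare against --- it simply asserts this proposition is ``immediate from the definition of treewidth,'' so you are supplying the details it omits. The argument you give is the standard one: take a width-$t$ tree decomposition, push the weight of each vertex of $G$ onto a single bag via $\phi$, find a weighted centroid $x$ of the decomposition tree, and then use connectivity of the set of bags meeting a component of $G-B_x$ (together with $x\notin S_C$) to confine each component to a single subtree of $T-x$, whose weight is at most $n/2$.

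Two small remarks, neither of which affects correctness. First, you are right that there is an off-by-one convention issue: under the usual definition (treewidth $=$ max bag size minus one) the bound is $|K|\le t+1$, not $|K|\le t$; since the paper never defines its convention and all downstream uses are asymptotic in $t$, this is harmless, but it is worth flagging exactly as you did. Second, in the centroid step, ``breaking a tie (when neither side exceeds $n/2$) arbitrarily'' should be read as: orient every edge, toward the strictly-heavy side if one exists, else arbitrarily. With that reading every edge contributes exactly one to the total out-degree, which is $|V(T)|-1$, and pigeonhole gives an out-degree-$0$ node; and an edge of $T$ at $x$ not pointing away from $x$ forces the opposite side to have weight $\le n/2$ in both the heavy-side and tie cases. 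Your phrasing is fine once parsed this way, but a reader might momentarily wonder whether ties are left unoriented, which would break the counting; making the ``always orient'' choice explicit would remove that hesitation.
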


First we introduce an algorithm to construct the required distribution over $r$-bounded quasipartitions of $M$. Steps 2 to 4 of the algorithm are recursive. At each recursive call the algorithm works on an associated sub-graph $G^*$ and a global set $R$ which is common to all recursive calls.

\begin{algorithm}[htbp]
\caption{Random quasipartition of a bounded treewidth graph}
\begin{algorithmic}
\Require A digraph $G$ of treewidth $t$, and $r>0$.
\Ensure A random $r$-bounded quasipartition $R$. \\
Initialization: 
Set $G^*=G$ and $R=E(G)$.
Perform the following recursive algorithm on $G^*$.
\begin{description}
\item{Step 1.} 
Pick $z \in [0,r/2]$ uniformly at random.
\item{Step 2.} If $|V(G^*)| \leq 1$, terminate the current recursive call. Otherwise pick a set of vertices $K \subseteq V(G^*)$ such that $|K| \leq t$ and removing $K$ from $G^*$ gives connected components $C_1,\ldots,C_m$, each containing at most $\frac{|V(G^*)|}{2}$ vertices. This is possible by Proposition \ref{prop: balanced separators tw}.
\item{Step 3.} For all $(u,v) \in E(G^*)$ remove $(u,v)$ from R if one of the following holds:

\begin{description}
\item{(a)}
$d_{G}(u,x)>z$ and $d_{G}(v,x) \leq z$ for some vertex $x \in K$.

\item{(b)}
$d_{G}(x,v)>z$ and $d_{G}(x,u) \leq z$ for some vertex $x \in K$.\end{description}

\item{Step 4.} Recursively call Steps 2-4 on the vertex-induced subgraphs $G^*[C_1],\ldots,G^*[C_m]$.

\item{Step 5.} Once all branches of the recursion terminate enforce transitivity on $R$: For all $u,v,w \in V(G)$ if $(u,v) \in R$ and $(v,w) \in R$ add $(u,w)$ to $R$.
\end{description}
\end{algorithmic}
\end{algorithm}

Next we state some properties of the resulting random quasipartition $R$. We will use these to prove the main theorem. 

\begin{lemma}\label{lem:r bounded}
If $(u,v) \in R$ then $d_G(u,v) \leq r$.
\end{lemma}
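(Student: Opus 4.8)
The claim is the analogue of Lemma \ref{rbounded} from the tree case: whenever a pair $(u,v)$ survives in $R$, the quasimetric distance $d_G(u,v)$ is at most $r$. The plan is to reduce to the single recursive level at which $u$ and $v$ are ``separated.'' Concretely: at the end of Step 5, a pair $(u,v)$ lies in $R$ only if it was placed there by the transitive closure of the relation that remained after all the recursive Step 3's — i.e.\ there is a directed path $u = a_1, a_2, \dots, a_k = v$ with each $(a_i, a_{i+1})$ an original edge of $G$ that was never removed in any recursive call of Step 3. So it suffices to show: if $(a,b) \in E(G)$ is never removed in Step 3 across the whole recursion, then it lies inside a single connected component at the level where its endpoints first get separated from a separator $K$, and moreover that component has small ``radius'' with respect to the nearest separator. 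Then summing along the path and using the triangle inequality gives $d_G(u,v) \le r$.

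The key step is a layering argument exactly parallel to Lemmas \ref{lemlayering}–\ref{lemlayering2}. Fix the top-level recursive call whose vertex set $V(G^*)$ contains both $u$ and $v$ but whose separator $K$ splits them into different components (or has one of them in $K$); such a call exists because eventually components shrink to a single vertex. Let $x \in K$ be a separator vertex on the relevant path. First I would argue that the whole surviving $u$-$v$ path $a_1,\dots,a_k$ lies in $V(G^*)$ at this level: this is because a surviving edge always has both endpoints in the same component, so following surviving edges never crosses out of $V(G^*)$ until we hit a vertex of $K$. Then, using that none of the edges $(a_i,a_{i+1})$ was removed by condition (a) of Step 3 with respect to $x$, all the quantities $d_G(a_i, x)$ lie in a common half-open interval of length $r/2$ determined by $z$: if $d_G(a_i,x) \le z$ and $d_G(a_{i+1},x) > z$ the edge would have been removed, and similarly we never jump a threshold, so $|d_G(a_i,x) - d_G(a_{i+1},x)| $ stays within one window of width $r/2$. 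Hence $d_G(u,x)$ and $d_G(x,v)$ are each at most $r/2$ (as in the tree case, using the natural ancestor/successor structure so that $d_G(u,x) = $ the difference of thresholds, bounded by $r/2$, and symmetrically for condition (b) giving $d_G(x,v) \le r/2$). The triangle inequality $d_G(u,v) \le d_G(u,x) + d_G(x,v) \le r$ then finishes it.

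There is one subtlety to handle carefully, which I expect to be the main obstacle: in the graph case the ``path from $u$ to $v$'' is no longer unique, so the clean case analysis of Lemma \ref{lemtreeproperty} (which used the nearest common ancestor in the tree) does not transfer verbatim. The fix is to work with the surviving \emph{witness} path $a_1,\dots,a_k$ directly rather than a shortest path, and to split it at the first vertex $a_j$ that enters the separator $K$ of the decisive recursive level: the prefix $a_1,\dots,a_j$ then consists of vertices in one component plus $a_j \in K$, and condition (a) controls $d_G(u, a_j)$, while the suffix $a_j,\dots,a_k$ is controlled by condition (b) giving $d_G(a_j, v)$. If $u$ or $v$ itself lies in $K$ the argument only gets easier. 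It remains to check the edge cases where the witness path never leaves a single component all the way down to singletons — but then $u = v$ and $d_G(u,v) = 0 \le r$ trivially. Assembling these pieces gives the lemma; the bound is tight up to the role of $r/2$ windows, exactly as in Lemma \ref{rbounded}.
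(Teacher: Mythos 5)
Your high-level plan — take the witness path of surviving edges, find the first recursive level at which some path vertex $a_k$ lands in the separator $K$, bound the prefix via condition~(a) and the suffix via condition~(b), then apply the triangle inequality — is exactly the paper's plan. But the middle layering argument you sketch does not carry over from the tree case, and as written it would not give the bound.

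Step~3 of the treewidth algorithm uses a \emph{single} threshold $z \in [0,r/2]$, not the repeated windows $z + i\cdot r/2$ of the tree algorithm, so the language of ``never jumping a threshold'' and ``$|d_G(a_i,x)-d_G(a_{i+1},x)|$ stays within one window of width $r/2$'' has no analogue here. Moreover you state condition~(a) in the wrong direction: it removes $(a_i,a_{i+1})$ when $d_G(a_i,x)>z$ and $d_G(a_{i+1},x)\le z$, not when $d_G(a_i,x)\le z$ and $d_G(a_{i+1},x)>z$. Because the condition is one-directional, a surviving edge may perfectly well have $d_G(a_i,x)\le z < d_G(a_{i+1},x)$, and the values $d_G(a_i,x)$ need not be monotone along a path in a general graph (there is no ``ancestor/successor structure''), so they need not lie in any common short interval. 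The missing ingredient is the \emph{anchor} $d_G(a_k,a_k)=0\le z$: starting from this, walk backward along $a_k,a_{k-1},\dots,a_1$; if $d_G(a_{i+1},a_k)\le z$ and the edge $(a_i,a_{i+1})$ survives condition~(a) with $x=a_k$, then necessarily $d_G(a_i,a_k)\le z$, and by induction $d_G(u,a_k)\le z$. The symmetric forward walk with condition~(b) gives $d_G(a_k,v)\le z$, and then $d_G(u,v)\le 2z\le r$. This argument only propagates a one-sided bound $\le z$, which is why it works without any monotonicity; your interval/window framing would have to be replaced by this anchor-and-propagate step to close the gap.
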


\begin{proof}
Suppose this were not the case and $d_G(u,v) > r$. The fact that $(u,v) \in R$ implies that at the beginning of Step 5 there must have been a path $P=\{a_1 = u,a_2,\ldots,a_m = v\}$ from $u$ to $v$ such that $(a_i,a_{i+1}) \in R$ for all $i\in [1,m-1]$. Consider the first branch of the recursion when a vertex on this path was part of $K$, the balanced separator. Let the vertex in $P$ that was part of K be $a_k$. Now we have that $d_{G}(a_k,a_k) = 0 \leq z$. This along with the fact that $(a_i,a_{i+1}) \in R$ for all $i\in [1,m-1]$ implies that $d_{G}(a_j,a_k) \leq z$ for all $a_j$ where $j<k$. Similarly it must be that $d_{G}(a_k,a_j) \leq z$ for all $a_j$ where $j>k$. Since $d_{G}(a_1,a_k) \leq z$ and $d_{G}(a_k,a_m) \leq z$ we have that $d_{G}(a_1,a_m)\leq 2z \leq r$ by triangle inequality. Since $d_G(u,v) = d_{G}(a_1,a_m) $ we have that $d_G(u,v) \leq r$ which is a contradiction.
\end{proof}

\begin{lemma}\label{depth}
The depth of the recursion is $O(\log{n})$.
\end{lemma}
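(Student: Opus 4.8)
The plan is to track how the size of the subgraph $G^*$ shrinks along any root-to-leaf path in the recursion tree. The key observation is already baked into Step 2 of the algorithm: whenever a recursive call is made on a subgraph $G^*$ with $|V(G^*)| \geq 2$, the separator $K$ (which exists by Proposition~\ref{prop: balanced separators tw}) is chosen so that every connected component $C_1,\ldots,C_m$ of $G^*\setminus K$ has at most $|V(G^*)|/2$ vertices. Hence each child subgraph $G^*[C_j]$ passed to a recursive call in Step 4 satisfies $|V(G^*[C_j])| \leq |V(G^*)|/2$.

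Concretely, I would fix an arbitrary branch of the recursion and let $G^{(0)} = G, G^{(1)}, G^{(2)}, \ldots$ be the sequence of subgraphs processed along that branch, with $n_i = |V(G^{(i)})|$. Then $n_0 = n$ and, by the argument above, $n_{i+1} \leq n_i / 2$ for every $i$ for which the call does not already terminate. By induction this gives $n_i \leq n / 2^i$. Since Step 2 terminates the current recursive call as soon as $|V(G^*)| \leq 1$, the branch cannot descend past the first level $i$ at which $n/2^i \leq 1$, i.e.\ past level $\lceil \log_2 n \rceil$. As the branch was arbitrary, the depth of the recursion is at most $\lceil \log_2 n \rceil = O(\log n)$.

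I do not expect a genuine obstacle here; the statement is essentially a restatement of the balanced-separator guarantee, and the only things to check are the trivial base cases (a subgraph with one vertex triggers immediate termination, and a subgraph with two vertices produces components of size at most one, so the recursion bottoms out one level later). The one place to be mildly careful is to make sure we are measuring depth as the number of nested recursive calls rather than counting Step~5, which runs only once after all branches return and therefore does not contribute to the depth.
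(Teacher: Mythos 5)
Your argument is correct and is essentially the same as the paper's (the paper simply states in one line that each recursive component has at most half the vertices of its parent); you have just spelled out the induction $n_i \le n/2^i$ explicitly. No gap here.
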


\begin{proof}
This is because at every level of recursion the number of vertices in each component is at most half the number of vertices of the parent component in the previous level. 
\end{proof}

\begin{lemma}\label{edge}
Any $(u,v) \in E(G)$ is removed with probability at most $4 t \frac{d(u,v)}{r}$ in Step 4 of the algorithm.
\end{lemma}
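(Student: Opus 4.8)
The plan is to bound, for a single recursive invocation of the algorithm on a subgraph $G^*$ with its chosen separator $K$ (so $|K|\le t$ by the choice made in Step 2, via Proposition~\ref{prop: balanced separators tw}), the probability over the random shift $z$ that the edge $(u,v)$ is deleted during the Step~3 of that invocation. Note that $(u,v)$ is only ever at risk of deletion when both of its endpoints lie in the current subgraph $G^*$; otherwise the claimed bound is trivial. In this invocation $(u,v)$ is removed exactly when some vertex $x\in K$ satisfies condition (a) or condition (b), so by the union bound it suffices to show that, for each fixed $x\in K$, the event ``(a) holds for $x$'' and the event ``(b) holds for $x$'' each have probability at most $2\,d_G(u,v)/r$; summing over the at most $t$ vertices of $K$ and over the two conditions then yields the bound $4t\,d_G(u,v)/r$.

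For condition (a) and a fixed $x$: it holds iff $d_G(v,x)\le z<d_G(u,x)$, i.e.\ iff $z$ lies in the half-open interval $[d_G(v,x),d_G(u,x))$ (empty when $d_G(v,x)\ge d_G(u,x)$). The triangle inequality applied to the triple $u,v,x$ gives $d_G(u,x)\le d_G(u,v)+d_G(v,x)$, so this interval has length at most $d_G(u,v)$. Since $z$ is drawn uniformly from $[0,r/2]$, any set of measure at most $d_G(u,v)$ is hit with probability at most $\frac{d_G(u,v)}{r/2}=\frac{2\,d_G(u,v)}{r}$. Symmetrically, condition (b) holds for $x$ iff $z\in[d_G(x,u),d_G(x,v))$, an interval of length $d_G(x,v)-d_G(x,u)\le d_G(u,v)$ by the triangle inequality applied to $x,u,v$; here it is essential that we use the asymmetric ``out-distances'' $d_G(x,\cdot)$ rather than $d_G(\cdot,x)$, which is precisely why two separate conditions are needed in the quasimetric setting. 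Hence this event too has probability at most $2\,d_G(u,v)/r$, and the union bound over $x\in K$ and over (a),(b) finishes the estimate.

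I do not expect a genuine obstacle: this is essentially the directed analogue of the standard ``random shift'' cut-probability bound, and the only delicate points are (i) invoking the triangle inequality in the correct direction for each of the two asymmetric distances, so that each interval length is controlled by $d_G(u,v)$, and (ii) keeping in mind that $z$ is supported on $[0,r/2]$ and so has density $2/r$. I would also remark that this per-invocation estimate is the workhorse behind Theorem~\ref{treewidth quasipartitions result}: combining it with a union bound over the $O(\log n)$ recursion levels in which $(u,v)$ can be considered (Lemma~\ref{depth}), and then with a union bound over the edges of a shortest path from $u$ to $v$ (whose collective survival forces $(u,v)\in R$ after the transitivity step of Step~5), gives the desired $O(t\log n)$-Lipschitz guarantee.
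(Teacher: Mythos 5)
Your proof is correct and follows essentially the same reasoning as the paper's: bound the probability that $z$ falls into one of the two half-open intervals of length at most $d_G(u,v)$ determined by each $x\in K$ (using the directed triangle inequality in the appropriate orientation for each), then union-bound over the at most $t$ separator vertices and the two conditions. The only cosmetic difference is bookkeeping: the paper sums both intervals per vertex $x$ to get $4\,d_G(u,v)/r$ and then multiplies by $t$, whereas you union-bound over $2t$ (vertex, condition) pairs at $2\,d_G(u,v)/r$ each; the underlying calculation is identical.
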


\begin{proof}	
Consider a vertex $x \in K$. $(u,v)$ is removed from $R$ if $z$ is chosen between $d_{G}(u,x)$ and $d_{G}(v,x)$ or between $d_{G}(x,v)$ and $d_{G}(x,u)$. The probability of that event is bounded by $\int_{d_{G}(v,x)}^{d_{G}(u,x)} p(z) dz + \int_{d_{G}(x,u)}^{d_{G}(x,v)} p(z) dz = \frac{2}{r} (d_{G}(u,x) - d_{G}(v,x)) + \frac{2}{r} (d_{G}(x,v) - d_{G}(x,u)) \leq 4 \frac{d(u,v)}{r}$ from triangle inequality.

Taking union bound over all the vertices in K we get that $(u,v) \in E(G)$ is removed with probability at most $4 t   \frac{d(u,v)}{r}$ in Step 4 of the algorithm.
\end{proof}

\begin{lemma}\label{lem:logn lipschitz}
 $\Pr[(u,v) \not \in R] \leq 4 t \log{n}  \frac{d_G(u,v)}{r}$ for all $u,v \in V(G)$.
\end{lemma}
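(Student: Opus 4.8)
The plan is to bound the probability that $(u,v)\notin R$ after Step 5 by tracing the unique structure that must be destroyed in order for the pair to fail. As in the tree case (Lemma \ref{probabilityofcutting}), I would fix $u,v\in V(G)$ and consider a shortest path $P=\{x_1=u,x_2,\ldots,x_h=v\}$ from $u$ to $v$ in $G$. If every edge $(x_i,x_{i+1})$ survives Step 3 across all recursive calls, then after the transitive closure in Step 5 we have $(u,v)\in R$ — this is essentially the contrapositive of the reasoning in Lemma \ref{lem:r bounded}, except here I only need the weaker fact that surviving edges compose under transitivity, so the event $(u,v)\notin R$ is contained in the event that some edge of $P$ is removed in Step 3 at some recursion level.

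The next step is to bound $\Pr[(x_i,x_{i+1})\text{ removed in Step 3}]$ for a single edge. Here I would combine Lemma \ref{edge} with Lemma \ref{depth}: a fixed edge can only be examined in Step 3 once per recursion level along the single root-to-leaf branch of the recursion tree whose induced subgraphs contain both its endpoints (once the endpoints are separated into different components, or one is placed in a separator set $K$, the edge is never reconsidered in a way that could remove it again — and even if one is lax about this, the edge is considered in at most $O(\log n)$ recursive calls total, one per level, by Lemma \ref{depth}). At each such call, Lemma \ref{edge} gives removal probability at most $4t\,d_G(x_i,x_{i+1})/r$. A union bound over the $O(\log n)$ levels yields $\Pr[(x_i,x_{i+1})\text{ removed}]\le 4t\log n\cdot d_G(x_i,x_{i+1})/r$, up to the constant hidden in Lemma \ref{depth}.

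Finally, I would take a union bound over the $h-1$ edges of $P$ and use the fact that $P$ is a shortest path, so $\sum_{i=1}^{h-1} d_G(x_i,x_{i+1}) = d_G(u,v)$. This telescoping gives
\[
\Pr[(u,v)\notin R] \;\le\; \sum_{i=1}^{h-1} 4t\log n\,\frac{d_G(x_i,x_{i+1})}{r} \;=\; 4t\log n\,\frac{d_G(u,v)}{r},
\]
as claimed (absorbing the constant from Lemma \ref{depth} into the $O(\cdot)$ if one is careful about whether the depth bound has a leading constant).

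The main obstacle I anticipate is the bookkeeping in the middle step: making precise the claim that a fixed edge is subject to a potential Step-3 removal in only $O(\log n)$ of the recursive calls, and that these events are the only way the edge leaves $R$. One must argue that once the recursion passes to subgraphs $G^*[C_1],\ldots,G^*[C_m]$, an edge $(x_i,x_{i+1})$ with both endpoints in some $C_j$ is only ever re-examined inside that single branch, and an edge with endpoints split between different $C_j$'s (or with an endpoint in $K$) is simply never touched again — so the number of "dangerous" recursive calls for a fixed edge is at most the recursion depth, $O(\log n)$. Everything else is a routine union bound plus the shortest-path telescoping identity already used in Lemma \ref{probabilityofcutting}.
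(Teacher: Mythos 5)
Your proof follows the paper's own argument exactly: fix the shortest path $p_j$ from $u$ to $v$, observe that $(u,v)\notin R$ forces some edge of $p_j$ to have been removed in Step~3, union-bound over the edges of $p_j$, bound each edge's removal probability by union-bounding Lemma~\ref{edge}'s per-call estimate over the $O(\log n)$ recursion levels from Lemma~\ref{depth}, and telescope using that $p_j$ is a shortest path. The bookkeeping you flag as a potential obstacle --- that a fixed edge is subject to Step~3 at most once per recursion level, along the single root-to-leaf branch whose subgraphs still contain both endpoints --- is precisely the implicit justification the paper invokes when it combines Lemmas~\ref{depth} and~\ref{edge}.
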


\begin{proof}
Let the set of all paths from $u$ to $v$ in $G$ be $P=\{p_1,p_2,\ldots,p_m\}$. Let $X_{p_i}$ be the event that path $p_i$ contains at least one edge $(w,z)$ such that $(w,z) \not \in R$ at the beginning of Step 6. Since we enforce transitivity in Step 6 of the algorithm we have that $\Pr[(u,v) \not \in R] = \Pr [X_{p_1} \wedge X_{p_2} \wedge \ldots \wedge X_{p_m} ]$. This implies that $\Pr[(u,v) \not \in R] \leq \Pr [X_{p_i}]$ for all $i \in [1,m]$. Now consider $p_j = \{x_1 = u,x_2,\ldots,x_h = v\}$, the shortest path from $u$ to $v$ in $G$. Let $Y_{(a,b)}$ be the event that $(a,b) \not \in R$ for $(a,b) \in E(G)$. We have $\Pr [X_{p_j}] = \Pr[Y_{(x_1,x_2)} \vee Y_{(x_2,x_3)} \vee \ldots \vee Y_{(x_{h-1},x_h)} ] \leq \Pr[Y_{(x_1,x_2)}] + \Pr[Y_{(x_2,x_3)}] + \ldots + \Pr[Y_{(x_{h-1},x_h)}] $ applying the union bound. Combined with Lemmas \ref{depth} and \ref{edge} this implies that $\Pr [X_{p_j}] \leq 4  t  \log{n}  \frac{d_G(p_j)}{r} $. Since $p_j$ is the shortest path from $u$ to $v$ we have that $4  t  \log{n}  \frac{d_G(p_j)}{r} = 4  t  \log{n}  \frac{d_G(u,v)}{r}$.
\end{proof}	

With these Lemmas we can now prove the main result of this section.

\begin{proof}[Proof of Theorem \ref{treewidth quasipartitions result}]
It follows by Lemmas \ref{lem:r bounded} and \ref{lem:logn lipschitz} that the Algorithm outputs an $O(t \log{n})$-Lipschitz distribution over $r$-bounded quasipartitions of $M$.
\end{proof}

The above algorithm can be implemented in polynomial time with an additional $O(\sqrt{\log t})$ loss on the quality of the partition.
This is summarized in the following Theorem.

\begin{theorem}\label{thm:tw poly algo}
Let $M$ be the shortest-path quasimetric space induced by a $n$-vertex directed graph $G$ of treewidth $t$. Then, there exists an algorithm with running time polynomial in $n$ and $t$ that computes the set of all quasipartitions in the support of an $O(t \sqrt{\log{t}}\log{n} )$-Lipschitz distribution over $r$-bounded quasipartitions of $M$, for any $r>0$.
\end{theorem}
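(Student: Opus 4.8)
The plan is to run the recursive algorithm from the proof of Theorem~\ref{treewidth quasipartitions result} with two modifications, and then to argue that the modified procedure has a polynomial-size, efficiently enumerable support. The only obstruction to polynomial running time is the balanced-separator step (Step~2 of that algorithm), which asks for an \emph{exactly} balanced vertex separator of size at most $t$, and this is $\mathsf{NP}$-hard to find. I would instead invoke the polynomial-time approximation algorithm of Feige, Hajiaghayi and Lee for minimum balanced vertex separators: since every vertex-induced subgraph $G^*$ of $G$ has treewidth at most $t$, by Proposition~\ref{prop: balanced separators tw} it admits a $\tfrac12$-balanced separator of size at most $t+1$, and their $O(\sqrt{\log \mathrm{OPT}})$-pseudo-approximation therefore returns, in time polynomial in $n$, a set $K$ with $|K| = O(t\sqrt{\log t})$ whose removal leaves components each with at most $\tfrac23|V(G^*)|$ vertices. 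After running this subroutine once (and fixing any internal randomness it uses), the entire recursion tree $\mathcal{T}$ of subgraphs and separators becomes deterministic.

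The second modification is to draw a \emph{single} threshold $z\in[0,r/2]$ uniformly at random and reuse it at every node of $\mathcal{T}$, instead of sampling a fresh $z$ at each recursive call. With these two changes the correctness analysis of Section~\ref{sec:treewidth} carries over with only cosmetic edits. The $r$-boundedness argument of Lemma~\ref{lem:r bounded} uses only the inequality $2z\le r$ and the fact that every path eventually has a vertex placed into some separator, so it is unaffected; the recursion depth is still $O(\log n)$ since component sizes shrink by a factor of at least $\tfrac23$ (as in Lemma~\ref{depth}); and for the Lipschitz bound I would, exactly as in Lemmas~\ref{edge} and~\ref{lem:logn lipschitz}, fix a shortest $u$-$v$ path and union-bound over its edges and over the $O(\log n)$ levels at which each such edge is still present, each level contributing at most $4|K|\,d_G(u,v)/r = O(t\sqrt{\log t})\,d_G(u,v)/r$ to the removal probability of that edge. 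Summing yields $\Pr[(u,v)\notin R]\le O(t\sqrt{\log t}\log n)\,d_G(u,v)/r$, which is the claimed Lipschitz constant. Reusing a single $z$ is harmless because all of these estimates are union bounds that never used independence of the thresholds.

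It remains to enumerate the support. Since $\mathcal{T}$ is fixed, the relation produced before the transitivity step is a deterministic piecewise-constant function of the scalar $z$, and the transitive closure taken in Step~5 is again piecewise constant in $z$. Every threshold tested in Step~3 across all nodes of $\mathcal{T}$ is a value of the form $d_G(a,b)$ with $a,b\in V(G)$, of which there are at most $n^2$; hence $[0,r/2]$ splits into at most $n^2+1$ intervals on which the output quasipartition is constant, so the distribution is supported on at most $n^2+1$ quasipartitions. The algorithm computes the sorted list of breakpoints, picks one representative $z$ in each resulting interval $I$, runs the recursion to obtain the removed edges, computes the transitive closure, and records the corresponding quasipartition together with its probability $2|I|/r$; this takes time polynomial in $n$, hence also in $t$ since $t\le n$ and the separator subroutine runs in time polynomial in $n$.

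The one step with genuine content is the appeal to the Feige--Hajiaghayi--Lee balanced-separator algorithm, which is precisely where the extra $O(\sqrt{\log t})$ factor enters; the only other point needing a little care is checking that replacing the per-call thresholds by a single global $z$ neither breaks the $r$-boundedness and Lipschitz estimates (it does not, as they are all union bounds) nor blows up the support (it does not, since it makes the output a function of a single scalar). I expect this single-$z$ coupling to be the subtlety most worth spelling out in detail.
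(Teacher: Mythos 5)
Your proposal is correct and follows essentially the same route as the paper: replace the exact balanced separator by the Feige--Hajiaghayi--Lee $O(\sqrt{\log t})$-pseudo-approximation (incurring the extra $\sqrt{\log t}$ factor and a $2/3$-balance, which only changes constants in the $O(\log n)$ depth), observe that the one remaining source of randomness is a single scalar $z\in[0,r/2]$, and derandomize by enumerating the at most $n^2$ critical values of $z$ of the form $d_G(a,b)$.

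One correction of framing rather than substance: your ``second modification''---drawing one global $z$ instead of a fresh $z$ per recursive call---is not a modification at all. In the paper's Algorithm~2, Step~1 (the draw of $z$) sits outside the recursion; Step~4 recurses only on Steps~2--4, and the text explicitly says ``Steps 2 to 4 of the algorithm are recursive.'' So the paper's algorithm already uses a single global $z$, and its proof of Theorem~\ref{thm:tw poly algo} relies on this when it says ``the only random decision in the algorithm is in Step~1.'' You are right that the union-bound analyses of Lemmas~\ref{edge} and~\ref{lem:logn lipschitz} never use independence across levels, so both versions would satisfy the Lipschitz bound; but a per-call $z$ would make the support exponential, so the single-$z$ design is essential to the theorem and is already built in. Your piecewise-constant/interval bookkeeping for the probabilities is a slightly more explicit version of the paper's terser claim that no $z$ outside $S$ yields a new quasipartition; both give a support of size $O(n^2)$ and polynomial running time.
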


\begin{proof}
The randomized algorithm described in Theorem \ref{treewidth quasipartitions result} can be derandomized to yield a 
polynomial time algorithm. First we note that it is possible to find in polynomial time a set $K$, with $|K|=O(t \sqrt{\log{t}})$  such that removing $K$ from $G$ gives connected components containing at most $\frac{2|V(G)|}{3}$ vertices \cite{feige2008improved}. We can use this in Step 2 of the algorithm.
The only random decision in the algorithm is in Step 1 when $z$ is chosen. Given $r > 0$, we can instead select $z$ exhaustively from all values in the set $S = \{ d(u,v) : u,v \in V(G) \text{ and } d(u,v)\leq \frac{r}{2} \}$. It can be observed from Step 3 of the algorithm that picking any other value of $z$ does not produce a new non-trivial $r$-bounded quasipartition. Since there are less than $n^2$ elements in $S$ this derandomized version of the algorithm runs in polynomial time in $n$ and the set of quasipartitions returned has less than $n^2$ elements.
\end{proof}

\section{Embeddings into quasiultrametrics and into convex combinations of 0-1 quasimetrics}\label{sec:convex}

In this Section we present our results on random embeddings into quasiultrametric spaces, and deterministic embeddings into convex combinations of 0-1 quasimetric spaces  (quasimetrics where all distances are either 0 or 1).

\subsection{Upper bounds}
We begin by establishing a relationship between quasipartitions and embeddings of quasimetric spaces into quasiultrametric spaces and 0-1 quasimetric spaces.
We say that a distribution over quasipartitions $\mathcal{D}$ is \emph{$\epsilon$-forcing}  if whenever $u,v\in X$ are such that $d(u,v) \leq \epsilon r$ then $\Pr_{P\sim {\cal D}}[(u,v)\notin P] = 0$.
First we state a result, inspired by \cite{bartal1996probabilistic}, that we use in subsequent proofs.

\begin{lemma}\label{epsilon forcing distributions}
Let $G$ be a directed graph on $n$ vertices. Let $M_{W}$ denote the shortest-path quasimetric space induced by $G$ where edge weights are specified by a function $W: E(G)\to \mathbb{R^{+}}$.
Suppose that for all $r>0$ there exists a $\beta$-Lipschitz distribution over $r$-bounded quasipartitions of $M_{W}$.
Then for all $r>0$ there exists a $2 \beta$-Lipschitz $\frac{1}{2n}$-forcing distribution over $r$-bounded quasipartitions of $M_{W}$.
\end{lemma}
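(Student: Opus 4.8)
The plan is to modify the given $\beta$-Lipschitz, $r$-bounded distribution so that pairs at distance $\le \tfrac{r}{2n}$ are never separated, while only losing a factor $2$ in both the Lipschitz constant and the boundedness parameter. The natural device, following \cite{bartal1996probabilistic}, is to first contract all ``short'' pairs and then apply the hypothesis at a scale reduced by a constant factor. Concretely, first I would define a new weight function $W'$ on $E(G)$ by $W'(e) = \max\{W(e), \tfrac{r}{2n}\}$ (or, essentially equivalently, pass to the quasimetric obtained from $M_W$ by rounding every edge length up to $\tfrac{r}{2n}$). Let $M_{W'}$ be the resulting shortest-path quasimetric. By hypothesis applied with parameter $r/2$, there is a $\beta$-Lipschitz distribution $\mathcal{D}'$ over $(r/2)$-bounded quasipartitions of $M_{W'}$. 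I would then output $\mathcal{D}'$, viewed as a distribution over quasipartitions of $M_W$ (the ground set is the same), and verify it has the three required properties with respect to $M_W$.

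The key steps are then: (i) relate $d_{W'}$ and $d_W$. On one hand $d_{W'}(u,v) \ge d_W(u,v)$ since we only increased edge lengths, so an $(r/2)$-bounded quasipartition of $M_{W'}$ is automatically $(r/2)$-bounded, hence $r$-bounded, for $M_W$. On the other hand, along a shortest $W$-path from $u$ to $v$ with $h \le n$ edges, each edge length grew by at most $\tfrac{r}{2n}$, so $d_{W'}(u,v) \le d_W(u,v) + h\cdot\tfrac{r}{2n} \le d_W(u,v) + \tfrac{r}{2}$. (ii) $\epsilon$-forcing: if $d_W(u,v) \le \tfrac{r}{2n}$, then the single rounded edge — or more carefully, the fact that in $M_{W'}$ every nonzero distance is at least $\tfrac{r}{2n}$ and, crucially, there is a $W'$-edge from $u$ to $v$ of length exactly $\tfrac{r}{2n}$ whenever $u\ne v$ and $d_W(u,v)\le \tfrac{r}{2n}$ — forces $d_{W'}(u,v) \le \tfrac{r}{2n} = \tfrac{1}{n}\cdot\tfrac{r/2}{1}$... here one must be a little careful with the bookkeeping so that $\tfrac{r}{2n}$ really is $\le \epsilon r'$ with $\epsilon = \tfrac{1}{2n}$ and $r' = r/2$; indeed $\epsilon r' = \tfrac{r}{4n}$, so one may instead want to round up to $\tfrac{r}{4n}$ and apply the hypothesis at scale $r/2$, or round to $\tfrac{r}{2n}$ and only claim $\tfrac{1}{n}$-forcing and then note $\tfrac{1}{2n}$-forcing follows a fortiori. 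In any case, since $\mathcal{D}'$ is $(r/2)$-bounded for $M_{W'}$, any separated pair has $d_{W'} > r/2 > \tfrac{r}{2n}\ge d_{W'}(u,v)$, a contradiction, so such a pair is never separated — giving the $\tfrac{1}{2n}$-forcing property. (iii) Lipschitz bound: for arbitrary $u,v$,
\[
\Pr_{P\sim\mathcal{D}'}[(u,v)\notin P] \le \beta\,\frac{d_{W'}(u,v)}{r/2} \le \beta\,\frac{d_W(u,v) + r/2}{r/2}.
\]
This is not yet $\le 2\beta\, d_W(u,v)/r$ when $d_W(u,v)$ is small, so the final step is a standard case split: when $d_W(u,v) \le \tfrac{r}{2n}$ the probability is $0$ by (ii), so the bound holds trivially; when $d_W(u,v) > \tfrac{r}{2n}$... this still does not immediately give the clean $2\beta$ bound, so the honest fix is to work at the scale of $d_{W'}$ directly: we get $\Pr[(u,v)\notin P] \le \beta\, d_{W'}(u,v)/(r/2) = 2\beta\, d_{W'}(u,v)/r$, and since $d_{W'}(u,v) = d_W(u,v)$ for every pair with $d_W(u,v) \ge \tfrac{r}{2n}$ (rounding up edges of length $\ge \tfrac{r}{2n}$ does nothing, and a shortest $W$-path of length $\ge \tfrac{r}{2n}$ uses no rounded-up edges on its cheapest realization — this needs the rounding to be $W'(e)=\max\{W(e),\text{threshold}\}$ and a short argument), the bound $\Pr[(u,v)\notin P] \le 2\beta\, d_W(u,v)/r$ follows, while for $d_W(u,v) < \tfrac{r}{2n}$ it is $0$.

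I expect the main obstacle to be exactly this last point: controlling $d_{W'}$ in terms of $d_W$ with no additive slack on the pairs that matter. Rounding edge lengths up can in principle lengthen a shortest path, and one must argue that for pairs at distance $\ge$ the threshold the distance is unchanged (because an optimal path need never traverse an edge whose original length was below threshold unless forced, and such edges contribute negligibly), or else absorb the $O(r)$ additive error into the multiplicative bound via the case analysis above. Getting the constants to land exactly at $2\beta$ and $\tfrac{1}{2n}$ may require choosing the rounding threshold as $\tfrac{r}{4n}$ and invoking the hypothesis at scale $r$ rather than $r/2$; I would fix the threshold last, after seeing which choice makes all three inequalities come out with the stated constants. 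Everything else is routine: $r$-boundedness is monotone under increasing weights, and the forcing property is immediate once the quasipartition is $(r/2)$- (or $r$-) bounded for $M_{W'}$.
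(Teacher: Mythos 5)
Your overall plan---modify the weight function to neutralize short edges, apply the hypothesis to the modified quasimetric at scale $r/2$, and transfer the three properties back to $M_W$---matches the paper's strategy. But you round the short edges \emph{up} to a positive threshold, while the paper rounds them \emph{down} to $0$ (that is, $W'(e)=W(e)$ when $W(e)>\tfrac{r}{2n}$ and $W'(e)=0$ otherwise). This choice of direction is not a cosmetic detail; it is what makes the forcing property provable, and your version of the forcing step is where the argument breaks.

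The $\tfrac{1}{2n}$-forcing property demands $\Pr_{P\sim\mathcal{D}'}[(u,v)\notin P]=0$ \emph{exactly} whenever $d_W(u,v)\le\tfrac{r}{2n}$. The only lever the hypothesis gives you is the Lipschitz inequality $\Pr[(u,v)\notin P]\le\beta\,d_{W'}(u,v)/(r/2)$, and the only way this forces probability zero is if $d_{W'}(u,v)=0$. Rounding down achieves this: when $d_W(u,v)\le\tfrac{r}{2n}$, every edge on a shortest $W$-path has $W(e)\le\tfrac{r}{2n}$, so all those edges get weight $0$ in $W'$ and $d_{W'}(u,v)=0$. Rounding up does the opposite---it makes \emph{every} nonzero distance at least $\tfrac{r}{2n}$---so the Lipschitz bound always leaves a positive probability of separation and forcing can never be concluded this way. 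Your attempted substitute, ``since $\mathcal{D}'$ is $(r/2)$-bounded for $M_{W'}$, any separated pair has $d_{W'}>r/2$,'' is a misreading of $r$-boundedness: boundedness only says that pairs \emph{in} $P$ are close; it gives no lower bound on the distance of pairs \emph{not} in $P$. You also tacitly assume an edge $(u,v)\in E(G)$ exists whenever $d_W(u,v)\le\tfrac{r}{2n}$, which is unwarranted. Finally, your fallback claim ``$d_{W'}(u,v)=d_W(u,v)$ for every pair with $d_W(u,v)\ge\tfrac{r}{2n}$'' is false in general when rounding up: a shortest $W$-path of total length $\ge\tfrac{r}{2n}$ can still consist of many sub-threshold edges, each of which is inflated, so $d_{W'}$ can be noticeably larger than $d_W$. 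With the paper's round-down modification, all three checks become clean: $d_{W'}\le d_W$ gives the $2\beta$-Lipschitz bound immediately; forcing follows from $d_{W'}(u,v)=0$; and $r$-boundedness follows from $d_W(u,v)\le d_{W'}(u,v)+n\cdot\tfrac{r}{2n}\le\tfrac{r}{2}+\tfrac{r}{2}=r$ along a simple shortest $W'$-path.
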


\begin{proof}
Given $r>0$ and $W: E(G)\to \mathbb{R^{+}}$ we define $W': E(G)\to \mathbb{R^{+}}$ as follows: $W'(e) = W(e)$ if $W(e) > \frac{r}{2n}$ and $W'(e) = 0$ otherwise.

Consider ${\cal D}$, a $\beta$-Lipschitz distribution over $\frac{r}{2}$-bounded quasipartitions of $M_{W'}$. We have 
\[
 \Pr_{P \sim {\cal D}}[(u,v) \notin P] \leq \beta  \frac{d_{M_{W'}}(u,v)}{r/2}
\]
This implies that,
\[
 \Pr_{P \sim {\cal D}}[(u,v) \notin P] \leq 2 \beta  \frac{d_{M_W}(u,v)}{r}
\]
Since $W'(e) = 0$ if $W(e) \leq \frac{r}{2n}$ we have that
\[
 \Pr_{P \sim {\cal D}}[(u,v) \notin P] = 0
\]
for any $u,v \in V(G)$ where $d_{M_W}(u,v)\leq \frac{r}{2n}$. If $P$ is an $\frac{r}{2}$-bounded quasipartition of $M_{W'}$, then we claim that $P$ is also an $r$-bounded quasipartition of $M_{W}$. 
Since any shortest path in $G$ has at most $n$ edges we have that $d_{M_W}(u,v) \leq d_{M_{W'}}(u,v) + \frac{r}{2}$ for all $u,v \in V(G)$. Consider $P$ an $\frac{r}{2}$-bounded quasipartition of $M'$. For any $(u,v) \in P$ we have that $d_{M_{W'}}(u,v) \leq \frac{r}{2}$. Therefore we have that $d_{M_W}(u,v) \leq \frac{r}{2} + \frac{r}{2}$. This implies that $P$ is an $r$-bounded quasipartition of $M$ and proves the claim. Therefore ${\cal D}$ is a $2 \beta$-Lipschitz $\frac{1}{2n}$-forcing distribution over $r$-bounded quasipartitions of $M_W$. This concludes the proof of the theorem.
\end{proof}

Now we present methods to use quasipartitions for constructing embeddings of quasimetric spaces into quasiultrametric spaces and 0-1 quasimetric spaces.
The proof resembles the argument used in \cite{bartal1996probabilistic} for computing random embeddings of a metric space into a tree.

\begin{theorem}\label{embedding into quasiultrametrics}
Let $M=(X,d)$ be an $n$-point quasimetric space and let $\beta>0$. Suppose that for any $r>0$, there exists a $\beta$-Lipschitz distribution over $r$-bounded quasipartitions of $M$. Then $M$ admits a random embedding into quasiultrametrics with distortion $O(\beta  \log{n})$.
\end{theorem}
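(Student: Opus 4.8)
The plan is to construct the random embedding hierarchically, mimicking the classical construction of Bartal-style tree embeddings but working with quasipartitions instead of partitions. First I would fix the diameter scale: by rescaling we may assume all distances lie in $(0,1]$, so that $\Delta = \max_{x,y} d(x,y) \leq 1$ and the minimum nonzero distance is at least $1/\mathrm{poly}(n)$ (this needs a truncation argument; see below). Let $L = \lceil \log_2 \Delta^{-1}_{\min} \rceil = O(\log n)$ be the number of scales. For each $i \in \{0,1,\ldots,L\}$ set $r_i = 2^{-i}$ and, independently across scales, sample an $r_i$-bounded $\beta$-Lipschitz quasipartition $P_i$ from the distribution guaranteed by hypothesis. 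I would also use Lemma~\ref{epsilon forcing distributions} to assume each $P_i$ is additionally $\frac{1}{2n}$-forcing at scale $r_i$, at the cost of replacing $\beta$ by $2\beta$; this ensures that at the finest scale every edge is kept, so the construction terminates cleanly with the identity relation separating all points.

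Next I would define the target quasiultrametric $d'$ on $X$ directly, without building an explicit hierarchy object: for $x \neq y$, set
\[
d'(x,y) = \max\{\, r_i \;:\; (x,y) \notin P_i \,\},
\]
and $d'(x,y) = \max r_i$ over all scales if the set is empty is impossible once the finest scale is forcing, but to be safe define $d'(x,y) = 2$ if the set is empty (it never will be for $x \ne y$). The point is that I then need to verify three things. (i) $d'$ is a quasiultrametric, i.e.\ it satisfies (C1) and (C2*). Symmetry is not required, which is exactly why quasipartitions suffice here. The key observation is that if $(x,z) \in P_i$ and $(z,y) \in P_i$ then $(x,y) \in P_i$ by transitivity of the quasipartition; contrapositively, $(x,y) \notin P_i$ implies $(x,z) \notin P_i$ or $(z,y) \notin P_i$, hence $d'(x,y) \le \max\{d'(x,z), d'(z,y)\}$. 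Condition (C1) follows because at the finest forcing scale no pair at distance $0$ is separated, and conversely every distinct pair is eventually separated since $d$ is a genuine quasimetric. (ii) Domination: $d'(x,y) \ge d(x,y)$ almost surely. If $d(x,y) > r_i$ then $r$-boundedness of $P_i$ forces $(x,y) \notin P_i$, so $d'(x,y) \ge r_i$ for every $i$ with $r_i < d(x,y)$; taking the largest such $i$ gives $d'(x,y) \ge d(x,y)/2$, and absorbing the factor $2$ into the scaling (start scales at $r_0 = 2\Delta$) yields $d'(x,y) \ge d(x,y)$ deterministically.

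For (iii), the distortion bound, I would compute $\mathbb{E}[d'(x,y)]$ by summing over scales: $d'(x,y) \ge r_i$ only if $(x,y) \notin P_i$ or $(x,y)\notin P_j$ for some finer $j$; more usefully, $\mathbb{E}[d'(x,y)] \le \sum_{i=0}^{L} r_i \cdot \Pr[(x,y) \notin P_i]$ after splitting $d'(x,y) = \sum_i (r_{i-1}-r_i)\mathbf{1}[d'(x,y)\ge r_i]$ or just bounding $d' \le \sum_i r_i \mathbf{1}[(x,y)\notin P_i]$ crudely. For the scales $i$ with $r_i \ge d(x,y)$ the Lipschitz property gives $\Pr[(x,y)\notin P_i] \le \beta d(x,y)/r_i$, so each contributes $\le \beta d(x,y)$, but there are $O(\log n)$ such scales — actually I should be careful: I want to sum $r_i \cdot \beta d(x,y)/r_i = \beta d(x,y)$ over the $O(\log n)$ relevant scales, giving $O(\beta \log n) d(x,y)$. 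For the finer scales $r_i < d(x,y)$, the forcing property contributes nothing once $r_i \le 2n\, d(x,y)$... wait, forcing says $\Pr[(x,y)\notin P_i]=0$ when $d(x,y)\le r_i/(2n)$, i.e.\ when $r_i \ge 2n\,d(x,y)$, which handles the coarsest scales, while for the remaining $O(\log n)$ scales around $d(x,y)$ we use the crude bound $\Pr[\cdot]\le 1$ and $\sum r_i = O(d(x,y))$ for the geometric tail below $d(x,y)$. Combining, $\mathbb{E}[d'(x,y)] = O(\beta \log n)\, d(x,y)$.

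The main obstacle I anticipate is the truncation of small edge weights needed to bound the number of scales by $O(\log n)$: a quasimetric can have distances spanning an arbitrarily large range, so one must first argue (as in Lemma~\ref{epsilon forcing distributions}, which is precisely set up for this) that rounding all edge weights below $r/(2n)$ down to $0$ distorts distances by at most a factor $2$ per scale and preserves the quasipartition guarantees. Threading this truncation consistently through all $L$ scales — so that the same embedding works simultaneously at every scale — is the delicate bookkeeping step; everything else is a direct transcription of the Bartal/FRT argument, with transitivity of quasipartitions playing the role that "being in the same cluster is an equivalence relation" plays in the symmetric case.
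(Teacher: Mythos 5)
Your proposal follows essentially the same approach as the paper's proof: apply Lemma~\ref{epsilon forcing distributions} to obtain a $2\beta$-Lipschitz, $\frac{1}{2n}$-forcing family of quasipartitions at geometric scales, define the quasiultrametric distance between $x$ and $y$ as (a constant times) the coarsest scale at which the sampled quasipartition separates them, verify the strong triangle inequality via transitivity of quasipartitions, verify non-contraction via $r$-boundedness, and bound the expected expansion by splitting the sum over scales into the three regimes (forcing kills the coarsest scales, Lipschitz handles the $O(\log n)$ intermediate scales, and a geometric tail handles the finest scales). The only differences are cosmetic — you define $d'$ by a closed-form $\max$ formula rather than the paper's iterative cascading construction, and you index scales in the opposite direction — plus a small sloppiness around the $x=y$ case of your formula (where the $\max$ is over an empty set) that should simply be patched by declaring $d'(x,x)=0$.
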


\begin{proof}
We may assume w.l.o.g.~that the minimum distance in $M$ is 1 and the diameter is $\Delta$. By Lemma \ref{epsilon forcing distributions} it follows that for any $r>0$ there exists a $2\beta$-Lipschitz $\frac{1}{2n}$-forcing distribution ${\cal D}_{r}$ over $r$-bounded quasipartitions of $M$. To get the required embedding we combine a series of quasipartitions.
Let $S = \{P_{0},P_{1},\ldots,P_{\lf \log {\Delta} \rf } \}$ where $P_{i} \in {\cal D}_{2^i}$ is a randomly chosen $2^i$-bounded quasipartition from ${\cal D}_{2^i}$.

We combine the quasipartitions as follows to get a quasiultrametric $M^*$:
\begin{description}
\item{Step 1:} Set $d_{M^{*}}(u,v) = 2^{\lf\log{\Delta}\rf + 1}$ for all $u,v \in V(M)$. Set $i = \lf\log{\Delta}\rf$.
\item{Step 2:} Set $d_{M^{*}}(u,v) = 2^i$ for all $(u,v) \in P_{i}$ if $d_{M^{*}}(u,v) = 2^{i+1}$. Decrease $i$ by 1. Repeat step 2 if $i \geq 0$.
\end{description}
We first argue that $M^*$ is a quasiultrametric.
To that end, consider any $u,v \in X$.
Let $j$ be the maximum value of $i$ such that $(u,v) \not\in P_i$. This implies that $d_M^*(u,v) = 2^{i+1}$. Consider any $w \in X$. It must be that either $(u,w) \not\in P_j$ or $(w,v) \not\in P_j$ because $(u,v) \not\in P_j$. This implies that either $d_M^*(u,w) = 2^{i+1}$ or $d_M^*(w,v) = 2^{i+1}$. So, for any $u,v,w \in X$ it must be that $d_M^*(u,v) \leq \max\{d_M^*(u,w),d_M^*(w,v)\}$.
This establishes that $M^*$ is a quasiultrametric.

Next we argue that $M^*$ is non-contracting.
Let us suppose that the claim is false and that $M^*$ is contracting. This means that for some $u,v \in X$ we have $d_{M^*}(u,v) < d_{M}(u,v)$. This means that in some iteration of Step 2 we set $d_{M^{*}}(u,v) = 2^i$ for some $i< \log{d_{M}(u,v)}$. This implies that $(u,v) \in P_i$ even though $d_{M}(u,v) >2^i$, which is a contradiction.

It remains to show that 
$M^*$ has expansion $O(\beta  \log{n})$.
Let $u,v\in X$.
We have
\begin{align*}
 \displaystyle \mathop{\mathbb{E}}\left[\frac{d_{M^*}(u,v)}{d_{M}(u,v)}\right] &\leq \sum_{i=0}^{\lf \log {\Delta}\rf} \Pr[(u,v) \not \in P_i]  \frac{2^{i+1}}{d_M(u,v)}\\
&\leq \sum_{i=0}^{\lf \log{ d_M(u,v) }\rf} \frac{2^{i+1}}{d_M(u,v)} + \sum_{i=\lf \log{ d_M(u,v) }\rf +1}^{\lf\log{ (2 n  d_M(u,v) )}\rf} 2\beta  \frac{d_M(u,v)}{2^i}  \frac{2^{i+1}}{d_M(u,v)}\\
& ~~~+ \sum_{i=\lf \log{ (2 n  d_M(u,v) )}\rf + 1}^{\lf \log {\Delta} \rf ]} 0\cdot \frac{2^{i+1}}{d_M(u,v)} \\
&\leq \frac{4 d_M(u,v)}{d_M(u,v)} + 4 \beta \log{n} =O(\beta \log {n}),
\end{align*}
concluding the proof.
\end{proof}

\begin{theorem}\label{embedding into 0-1 quasimetrics}
Let $M=(X,d)$ be an $n$-point quasimetric space and let $\beta>0$. Suppose that for any $r>0$, there exists a $\beta$-Lipschitz distribution over $r$-bounded quasipartitions of $M$. Then $M$ admits an embedding into a convex combination of 0-1 quasimetric spaces with distortion $O(\beta  \log{n})$.
\end{theorem}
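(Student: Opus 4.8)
The plan is to mimic the proof of Theorem~\ref{embedding into quasiultrametrics}, but instead of collapsing all the quasipartitions into a single weighted quasiultrametric, we keep each ``level set'' of the hierarchy as its own $0$-$1$ quasimetric and take a convex combination over both the randomness in the quasipartitions and the choice of level. First I would apply Lemma~\ref{epsilon forcing distributions} to obtain, for every $r>0$, a $2\beta$-Lipschitz, $\frac{1}{2n}$-forcing distribution ${\cal D}_r$ over $r$-bounded quasipartitions of $M$; as before, rescale so that the minimum distance is $1$ and the diameter is $\Delta$, and set $N=\lfloor\log\Delta\rfloor$.

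Next, for each $i\in\{0,1,\ldots,N\}$ and each quasipartition $P$ in the support of ${\cal D}_{2^i}$, define a $0$-$1$ quasimetric $q_{i,P}$ on $X$ by $q_{i,P}(u,v)=0$ if $(u,v)\in P$ and $q_{i,P}(u,v)=1$ otherwise. Note $q_{i,P}$ is a genuine $0$-$1$ quasimetric: it satisfies (C1) because $P$ is reflexive and (away from the diagonal) $r$-bounded with $r\ge 1$, and it satisfies (C2) (in the form $q(u,v)\le q(u,w)+q(w,v)$) precisely because $P$ is transitive — if both summands are $0$ then $(u,w),(w,v)\in P$ so $(u,v)\in P$. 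The claimed embedding is the identity map together with the convex combination that picks level $i$ with weight proportional to $2^{i}$ (more precisely, I would assign coefficient $\lambda_{i,P} = c\,2^{i}\Pr_{{\cal D}_{2^i}}[P]$ with $c$ a normalizing constant of order $1/2^{N}$), so that the resulting quasimetric is $\hat d(u,v) = \sum_{i,P}\lambda_{i,P}\,q_{i,P}(u,v) = c\sum_{i=0}^{N} 2^{i}\Pr_{P\sim{\cal D}_{2^i}}[(u,v)\notin P]$.

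The two things to verify are non-contraction and the $O(\beta\log n)$ expansion, and both are essentially the computations already carried out in Theorem~\ref{embedding into quasiultrametrics}. For non-contraction: when $2^{i} \le d_M(u,v)$ the quasipartition $P\sim{\cal D}_{2^i}$ is $2^i$-bounded, hence $(u,v)\notin P$ with probability $1$, so those terms alone contribute $c\sum_{2^i\le d_M(u,v)} 2^i \ge c\cdot\tfrac12 d_M(u,v)$; choosing $c$ appropriately (of order $1/2^{N}$ so that the total weight is $1$ — here one must be slightly careful and I would instead normalize by $\sum_{i=0}^N 2^i \le 2^{N+1}$, giving $c=1/2^{N+1}$, which then also forces an overall rescaling of the target quasimetric by $2^{N+1}$, a harmless global factor since distortion is scale-invariant) makes $\hat d(u,v)\ge d_M(u,v)$. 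For expansion: split the sum $\sum_i 2^i\Pr[(u,v)\notin P_i]$ at $i\approx\log d_M(u,v)$ and $i\approx\log(2n\,d_M(u,v))$ exactly as in the previous proof — the low range contributes $O(d_M(u,v))$ using the trivial bound $\Pr\le 1$, the middle range contributes $O(\beta\log n)\,d_M(u,v)$ using the $2\beta$-Lipschitz bound $\Pr[(u,v)\notin P_i]\le 2\beta\,d_M(u,v)/2^i$, and the high range contributes $0$ by the $\frac1{2n}$-forcing property.

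The main obstacle, and the only place this differs substantively from Theorem~\ref{embedding into quasiultrametrics}, is bookkeeping the normalization so that the $\lambda_{i,P}$ genuinely form a convex combination while still certifying non-contraction; the cleanest route is to prove the inequalities $\tfrac12 d_M(u,v)\le \sum_{i=0}^N 2^i\Pr[(u,v)\notin P_i]\le O(\beta\log n)\,d_M(u,v)$ first, observe the upper-to-lower ratio is $O(\beta\log n)$, and only then divide through by whatever constant (independent of $u,v$) turns the coefficients into a probability distribution — the distortion is unaffected. I should also double-check that $q_{i,P}$ is well-defined as a quasimetric after the $\frac1{2n}$-forcing rescaling did not destroy (C1), i.e.\ that no two distinct points are identified; since the rescaled minimum distance is $1$ and each ${\cal D}_{2^i}$ is $2^i$-bounded with $2^i\ge 1$, this is immediate except possibly at level $i=0$ where $2^0=1$ equals the minimum distance, and there $(u,v)\in P_0$ with $u\ne v$ can only happen when $d_M(u,v)=1$, still leaving $\hat d(u,v)>0$ from higher levels — so (C1) holds for $\hat d$ and the individual $q_{i,P}$ need only be quasi-\emph{pseudo}metrics, which is fine for a convex combination.
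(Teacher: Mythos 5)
Your proposal is correct and follows essentially the same route as the paper's own proof: apply Lemma~\ref{epsilon forcing distributions} to get $\frac{1}{2n}$-forcing distributions ${\cal D}_{2^i}$ at geometric scales $i=0,\ldots,\lfloor\log\Delta\rfloor$, convert each sampled quasipartition into a $0$-$1$ quasimetric, weight level $i$ proportionally to $2^i$, and then split the resulting sum at $\log d(u,v)$ and $\log(2n\,d(u,v))$ to get the lower bound $\Omega(d(u,v))$ and upper bound $O(\beta\log n)\,d(u,v)$. The only differences are cosmetic (the paper weights by $2^{i+1}/c$ with $c=\sum 2^{i+1}$ rather than normalizing after the fact), and your side remark about the level-$0$ quasipartition possibly identifying points at distance $1$ is a genuine subtlety that the paper glosses over but that, as you note, is harmless once one allows the summands to be quasi-pseudometrics.
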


\begin{proof}
We may assume w.l.o.g.~that the minimum distance in $M$ is 1 and the diameter is $\Delta$. By Lemma \ref{epsilon forcing distributions} it follows that for any $r >0$ there exists ${\cal D}_{r}$ a $\beta$-Lipschitz $\frac{1}{2n}$-forcing distribution over $r$-bounded quasipartitions of $M$. Let $S = \{ {\cal D}_{0},{\cal D}_{1},\ldots,{\cal D}_{\lf \log{\Delta} \rf}  \}$. Let $ c =  \sum_{i \in [0,\lf \log {\Delta}\rf]} 2^{i+1}$. Let $H$ be a discrete distribution over $S$ where the probability density function $F$ is given by $F({\cal D}_{i}) = \frac{2^{i+1}}{c}$.

Let us define $Y$ to be the event that a random quasipartition is selected from the distribution ${\cal D}_{i}$ where ${\cal D}_{i}$ is randomly chosen from the distribution $H$. This gives a distribution over a set of quasipartitions. We can replace every quasipartition $P$ in this set by a 0-1 quasimetric space $Q=(X,d_P)$ where for all $(u,v) \in P$ we have that $d_P(u,v)=0$ and for all $(u,v) \notin P$ we have that $d_P(u,v)=1$. This gives a distribution over a set of 0-1 quasimetric spaces which can be interpreted as a convex combination of 0-1 quasimetric spaces. Let the quasimetric space given by this convex combination of 0-1 quasimetric spaces be $\phi = (X,d_{\phi})$. We will now show that the distortion is bounded for this embedding. First we claim that for all $u,v \in X$, $d_{\phi}(u,v) \geq \frac{d(u,v)}{c}$. This can be shown as follows. We have that $$ d_{\phi}(u,v) = \sum_{i=0}^{\lf \log {\Delta}\rf} \frac{2^{i+1}}{c}  \Pr_{P\sim {\cal D}_{i}}[(u,v) \not \in P]
\geq \sum_{i=0}^{\lf \log{ d(u,v) }\rf} \frac{2^{i+1}}{c} 
\geq \frac{d(u,v)}{c},$$
which proves the claim. Next we show that for all $x,y \in X$, $d_{\phi}(x,y) \leq O(\beta \log{n}) \frac{d(x,y)}{c}$.
We have

\begin{align*}
 \displaystyle d_{\phi}(u,v) &= \sum_{i=0}^{\lf \log {\Delta}\rf} \frac{2^{i+1}}{c}  \Pr_{P\sim {\cal D}_{i}}[(u,v) \not \in P] \\
&\leq \sum_{i=0}^{\lf \log{ d(u,v) }\rf} \frac{2^{i+1}}{c} + \sum_{i=\lf \log{ d(u,v) }\rf +1}^{\lf\log{ (2 n  d(u,v) )}\rf} \beta  \frac{d(u,v)}{2^i}  \frac{2^{i+1}}{c} + \sum_{i=\lf \log{ (2 n  d(u,v) )}\rf + 1}^{\lf \log {\Delta} \rf } 0\cdot \frac{2^{i+1}}{c} \\
&\leq 4 \frac{d(x,y)}{c} + 2 \beta \log{n} \frac{d(x,y)}{c} \leq O(\beta \log{n}) \frac{d(x,y)}{c}. 
\end{align*}

From the above lower and upper bounds we get that $\phi$ is an embedding of $M$ with distortion $O(\beta  \log{n})$. This concludes the proof of the theorem.
\end{proof}

We get the following Corollaries by combining the above Theorems with the main result of Section \ref{sec:treewidth}.

\begin{corollary}
Let $M = (X,d)$ be the shortest path quasimetric space induced by a directed graph on $n$ vertices having treewidth $t$. Then $M$ admits a random embedding into a quasiultrametric space with distortion $O(t \log^2 n)$.
Moreover there exists an algorithm with running time polynomial in $n$ and $t$, that samples a random embedding into a quasiultrametric space with distortion $O(t \sqrt{\log t} \log^2 n)$.
\end{corollary}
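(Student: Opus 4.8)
The plan is to obtain the Corollary as a direct composition of the quasipartition machinery of Section~\ref{sec:treewidth} with the generic reduction of Theorem~\ref{embedding into quasiultrametrics}. For the first assertion: since $G$ has treewidth $t$, Theorem~\ref{treewidth quasipartitions result} guarantees that for every $r>0$ the induced quasimetric $M$ admits an $O(t\log n)$-Lipschitz distribution over $r$-bounded quasipartitions of $M$. Plugging $\beta = O(t\log n)$ into Theorem~\ref{embedding into quasiultrametrics} then yields a random embedding of $M$ into quasiultrametrics with distortion $O(\beta\log n) = O(t\log^2 n)$; no further work is needed.

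For the algorithmic statement I would replace Theorem~\ref{treewidth quasipartitions result} by Theorem~\ref{thm:tw poly algo}: for every $r>0$ one can, in time polynomial in $n$ and $t$, produce (the support of) an $O(t\sqrt{\log t}\log n)$-Lipschitz distribution over $r$-bounded quasipartitions of $M$, and in particular sample a single such quasipartition in polynomial time (sample $z$ and run the algorithm using the approximate balanced separator of~\cite{feige2008improved} in place of the optimal one). It then remains to check that the construction in the proof of Theorem~\ref{embedding into quasiultrametrics} can itself be carried out in polynomial time once the per-scale quasipartitions are supplied by this algorithm. There are three points to verify: (i) the reduction through Lemma~\ref{epsilon forcing distributions} to a $\tfrac{1}{2n}$-forcing distribution is obtained by running the quasipartition algorithm on the modified weight function $W'$, which leaves the underlying graph---and hence its treewidth---unchanged and costs only a constant factor in the Lipschitz parameter, so the bound stays $O(t\sqrt{\log t}\log n)$; (ii) after normalizing the minimum distance to $1$, the construction samples one quasipartition per scale $2^0,2^1,\dots,2^{\lfloor\log\Delta\rfloor}$, and $\log\Delta$ is polynomial in the bit-length of the input, so only polynomially many quasipartitions are drawn; (iii) assembling $M^*$ via Steps~1--2 of that proof is an elementary $O(n^2\log\Delta)$-time computation. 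Combining these, the random embedding is sampled in time polynomial in $n$ and $t$ and has distortion $O\big((t\sqrt{\log t}\log n)\cdot\log n\big) = O(t\sqrt{\log t}\log^2 n)$.

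I expect the only real point requiring care to be this bookkeeping around the running time---in particular confirming that the approximate separator of~\cite{feige2008improved} (which only guarantees components of size at most $\tfrac{2}{3}|V(G^*)|$) still gives recursion depth $O(\log n)$, so that the argument of Lemma~\ref{depth} and the union bound of Lemma~\ref{lem:logn lipschitz} go through with the stated Lipschitz constant, and that the scale range $O(\log\Delta)$ stays polynomially bounded. The embedding argument itself contributes no new difficulty, being a verbatim instantiation of Theorem~\ref{embedding into quasiultrametrics}.
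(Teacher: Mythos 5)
Your proposal is correct and follows the same route as the paper: the existential part is exactly the composition of Theorem~\ref{treewidth quasipartitions result} (with $\beta = O(t\log n)$) into Theorem~\ref{embedding into quasiultrametrics} (which itself invokes Lemma~\ref{epsilon forcing distributions}), and the algorithmic part swaps in Theorem~\ref{thm:tw poly algo}. Your extra bookkeeping---recursion depth under the $\tfrac{2}{3}$-balanced separator, polynomial number of scales, and poly-time assembly of $M^*$---is all sound and merely spells out what the paper states tersely.
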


\begin{proof}
The existential part follows from Lemma \ref{epsilon forcing distributions} and Theorems \ref{treewidth quasipartitions result} and \ref{embedding into quasiultrametrics}.
The computational part uses Theorem \ref{thm:tw poly algo}.
\end{proof}

\begin{corollary}\label{cor:bounded treewidth 0-1 quasimetrics}
Let $M = (X,d)$ be the shortest path quasimetric space induced by a directed graph on $n$ vertices having treewidth $t$. Then $M$ admits an embedding into a convex combination of 0-1 quasimetric spaces with distortion $O(t\log^2 (n))$.
Moreover, there exists an algorithm with running time polynomial in $n$ and $t$, that computes an embedding into a convex combination of 0-1 quasimetric spaces with distortion $O(t \sqrt{\log t} \log^2 n)$.
\end{corollary}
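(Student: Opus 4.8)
The plan is to obtain this corollary by composing three results already established: the $O(t\log n)$-Lipschitz random quasipartition of Theorem~\ref{treewidth quasipartitions result}, the forcing reduction of Lemma~\ref{epsilon forcing distributions}, and the quasipartition-to-0-1-quasimetric embedding of Theorem~\ref{embedding into 0-1 quasimetrics}. For the existential statement I would argue as follows. By Theorem~\ref{treewidth quasipartitions result}, for every $r>0$ the quasimetric $M$ admits an $O(t\log n)$-Lipschitz distribution over $r$-bounded quasipartitions. Feeding $\beta = O(t\log n)$ into Theorem~\ref{embedding into 0-1 quasimetrics} immediately yields an embedding of $M$ into a convex combination of 0-1 quasimetric spaces with distortion $O(\beta \log n) = O(t\log^2 n)$. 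Note that Theorem~\ref{embedding into 0-1 quasimetrics} already invokes Lemma~\ref{epsilon forcing distributions} internally to upgrade the given distribution to a $\tfrac{1}{2n}$-forcing one, so no additional work is needed for the existential part.

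For the computational statement I would run the same pipeline but replace the (randomized, possibly non-constructive) use of Theorem~\ref{treewidth quasipartitions result} by Theorem~\ref{thm:tw poly algo}, which produces, in time polynomial in $n$ and $t$, the \emph{entire support} (of size less than $n^2$) of an $O(t\sqrt{\log t}\log n)$-Lipschitz distribution over $r$-bounded quasipartitions of $M$. I would then revisit the proof of Theorem~\ref{embedding into 0-1 quasimetrics}: it builds the convex combination from the $O(\log\Delta)$ scale-distributions $\mathcal{D}_0,\dots,\mathcal{D}_{\lfloor\log\Delta\rfloor}$, each of which is obtained from Lemma~\ref{epsilon forcing distributions} — i.e.\ by reweighting the input edge lengths and calling the quasipartition subroutine — and mixed with the explicit weights $F(\mathcal{D}_i) = 2^{i+1}/c$. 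Since the reweighting in Lemma~\ref{epsilon forcing distributions} is constructive and each $\mathcal{D}_i$ now has support of polynomial size, the resulting embedding is a convex combination of $\mathrm{poly}(n)\cdot O(\log\Delta)$ explicitly listed 0-1 quasimetric spaces with explicitly computable coefficients, hence computable in polynomial time; plugging $\beta = O(t\sqrt{\log t}\log n)$ into the distortion bound of Theorem~\ref{embedding into 0-1 quasimetrics} gives the claimed $O(t\sqrt{\log t}\log^2 n)$.

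The main point to be careful about is the bookkeeping for the derandomized pipeline: first, that Lemma~\ref{epsilon forcing distributions} applies verbatim to the explicit distribution of Theorem~\ref{thm:tw poly algo} — this is fine, since that lemma only uses the Lipschitz and $r$-boundedness guarantees, both of which Theorem~\ref{thm:tw poly algo} provides; second, that the number of distance scales used in Theorem~\ref{embedding into 0-1 quasimetrics} is polynomially bounded, for which one restricts attention to scales $2^i$ with $i$ between $\lfloor\log(\min_{x\neq y} d(x,y))\rfloor$ and $\lfloor\log\Delta\rfloor$, and (if a genuinely polynomial bound in $n$ is desired) observes that only scales $2^i$ lying within a factor $2$ of some pairwise distance contribute a nontrivial quasipartition, of which there are at most $O(\binom{n}{2})$. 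With these observations the construction is entirely effective, so I do not expect a substantive obstacle — the statement is essentially a composition lemma, and the only real work is confirming that polynomiality is preserved through each composed step.

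\begin{proof}
The existential part follows from Lemma~\ref{epsilon forcing distributions} and Theorems~\ref{treewidth quasipartitions result} and~\ref{embedding into 0-1 quasimetrics}, taking $\beta=O(t\log n)$ so that the distortion is $O(\beta\log n)=O(t\log^2 n)$.
For the computational part, we replace the use of Theorem~\ref{treewidth quasipartitions result} by Theorem~\ref{thm:tw poly algo}, which computes in time polynomial in $n$ and $t$ the support (of size less than $n^2$) of an $O(t\sqrt{\log t}\log n)$-Lipschitz distribution over $r$-bounded quasipartitions of $M$.
The construction in the proof of Theorem~\ref{embedding into 0-1 quasimetrics} combines $O(\log\Delta)$ such distributions, one per scale $2^i$, each passed through the (constructive) reweighting of Lemma~\ref{epsilon forcing distributions}, with the explicit mixing weights $F(\mathcal{D}_i)=2^{i+1}/c$.
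Since each scale contributes a convex combination of at most $n^2$ explicitly listed 0-1 quasimetric spaces with explicitly computable coefficients, and only $O(n^2)$ scales contribute a nontrivial quasipartition, the entire embedding is computed in time polynomial in $n$ and $t$.
Its distortion is $O(\beta\log n)$ with $\beta=O(t\sqrt{\log t}\log n)$, i.e.\ $O(t\sqrt{\log t}\log^2 n)$.
\end{proof}
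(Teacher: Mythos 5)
Your proposal is correct and follows exactly the same composition as the paper: Theorem~\ref{treewidth quasipartitions result} (or Theorem~\ref{thm:tw poly algo} for the algorithmic part) supplies the Lipschitz quasipartition distribution, which is fed through Lemma~\ref{epsilon forcing distributions} and Theorem~\ref{embedding into 0-1 quasimetrics} to get the stated distortion bounds. The extra bookkeeping you include about the number of scales and the polynomial support size is a reasonable sanity check but not a departure from the paper's (terser) argument.
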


\begin{proof}
This follows from Lemma \ref{epsilon forcing distributions} and Theorem \ref{treewidth quasipartitions result} and \ref{embedding into 0-1 quasimetrics}.
The computational part uses Theorem \ref{thm:tw poly algo}.
\end{proof}

\subsection{Lower bounds}
We now obtain a lower bound on the distortion of random embeddings of general quasimetric spaces into quasiultrametric spaces. To this end we show that if a quasimetric space admits a random embedding into quasiultrametric spaces, then it is possible to construct a Lipschitz distribution over $r$-bounded quasipartitions of the quasimetric space for any $r>0$.

We also show that subdivided directed acyclic graphs cannot embed into a graph with bounded distortion unless there is a minor of the original graph present in the embedding.

\begin{theorem}\label{thm:q_lower}
Let $M=(X,d)$ be a quasimetric space and let $\beta>0$.
Suppose that $M$ admits a random embedding into quasiultrametric spaces ${\cal D^*}$ with distortion $\beta$. Then for any $r>0$, there exists an $\beta$-Lipschitz distribution over $r$-bounded quasipartitions of $M$. 
\end{theorem}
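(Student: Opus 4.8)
The plan is to pull back the sublevel-set quasipartition of the quasiultrametric target. Fix $r>0$. Given a pair $(f,M')$ with $M'=(X',d')$ drawn from the random embedding $\mathcal{D}^*$, define a relation on $X$ by
\[
P \;=\; \{(x,y)\in X\times X : d'(f(x),f(y)) \le r\}.
\]
The distribution $\mathcal{D}$ we will exhibit is simply the push-forward of $\mathcal{D}^*$ under the map $(f,M')\mapsto P$.

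First I would check that $P$ is indeed a quasipartition of $M$, i.e.\ a transitive reflexive relation. Reflexivity is immediate since $d'(f(x),f(x))=0\le r$ by (C1) for $M'$. For transitivity, the key observation is that in a quasiultrametric the sublevel relation $\{(a,b):d'(a,b)\le r\}$ is transitive: if $d'(f(x),f(y))\le r$ and $d'(f(y),f(z))\le r$, then by (C2*) we get $d'(f(x),f(z))\le\max\{d'(f(x),f(y)),d'(f(y),f(z))\}\le r$. This is exactly the quasimetric analogue of the fact that for ultrametrics this relation is an equivalence relation. Next, $P$ is $r$-bounded: if $(x,y)\in P$ then $d(x,y)\le d'(f(x),f(y))\le r$, where the first inequality holds because the embedding is non-contracting with probability $1$. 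Reflexivity and transitivity hold for \emph{every} sampled pair, and $r$-boundedness holds almost surely, so $\mathcal{D}$ is (a.s.) supported on $r$-bounded quasipartitions of $M$.

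It remains to bound $\Pr_{P\sim\mathcal{D}}[(x,y)\notin P]$. By construction this probability equals $\Pr_{(f,M')\sim\mathcal{D}^*}[d'(f(x),f(y))>r]$. Since $d'(f(x),f(y))$ is a non-negative random variable with expectation at most $\beta\,d(x,y)$ (by the distortion guarantee of $\mathcal{D}^*$), Markov's inequality gives
\[
\Pr_{(f,M')\sim\mathcal{D}^*}[d'(f(x),f(y))>r] \;\le\; \frac{\mathbb{E}[d'(f(x),f(y))]}{r} \;\le\; \beta\,\frac{d(x,y)}{r}.
\]
Hence $\mathcal{D}$ is a $\beta$-Lipschitz distribution over $r$-bounded quasipartitions of $M$, as required.

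I do not anticipate a serious obstacle: the only structural input is that sublevel sets of a quasiultrametric distance are transitive, and the quantitative content is a one-line application of Markov's inequality. The one degenerate case to keep in mind is when $d'(f(x),f(y))=+\infty$ or $d(x,y)=+\infty$, but in either case the claimed inequality is trivially satisfied, so no special handling is needed.
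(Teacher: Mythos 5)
Your proof is correct and follows essentially the same approach as the paper: pull back the sublevel-set relation $\{(x,y):d'(f(x),f(y))\le r\}$, verify that (C2*) makes it transitive and that non-contraction makes it $r$-bounded, and then bound $\Pr[(x,y)\notin P]$ by the distortion guarantee. The only cosmetic difference is that you invoke Markov's inequality directly, whereas the paper writes out the equivalent inequality via an integral against the distribution of $d_{M^*}(u,v)$.
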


\begin{proof}
We get a $\beta$-Lipschitz distribution ${\cal D}$ over $r$-bounded quasipartitions of $M$ by modifying ${\cal D^*}$. For every quasiultrametric $M^* \in \cal D^*$ we add the $r$-bounded quasipartition $R(M^*)$ to ${\cal D}$ where $(u,v) \in R(M^*)$ iff $d_{M^*}(u,v) \leq r$. The probability of selecting any $R(M^*) \in {\cal D} $ is set to be equal to that of selecting the corresponding $M^* \in \cal D^*$. 

First we claim that $R(M^*)$ is an $r$-bounded quasipartition of $M$. Consider any $u,v,w \in V(M)$. Suppose we have that $(u,v) \in R(M^*)$ and $(v,w) \in R(M^*)$ then it must be that $d_{M^*}(u,v) \leq r$ and $d_{M^*}(v,w) \leq r$. Since $M^*$ is a quasiultrametric this implies that $d_{M^*}(u,w) \leq \max \{ d_{M^*}(u,v), d_{M^*}(v,w) \} \leq r$. This means that $(u,w) \in R(M^*)$ which implies that $R(M^*)$ is transitive and is hence a quasipartition of $M$. Since we have $(u,v) \in R(M^*)$ iff $d_{M^*}(u,v) \leq r$ and $d_{M^*}(u,v) \geq d_{M}(u,v)$ it follows that $R(M^*)$ is an $r$-bounded quasipartitions of $M$.

Next we prove that ${\cal D}$ is $\beta$-Lipschitz. We have
\begin{align*}
\displaystyle \beta &= \mathop{\mathbb{E}} \left[\frac{d_{M^*}(u,v)}{d_{M}(u,v)}\right] \geq \int_{d_M(u,v)}^{\infty} \left(\frac{z}{d_{M}(u,v)}\right)  \Pr [d_{M^*}(u,v) = z]  dz \\
& \geq  \left( \frac{r}{d_{M}(u,v)} \right)  \Pr [d_{M^*}(u,v) > r].\\
\end{align*} 
This implies that $\Pr [d_{M^*}(u,v) > r] \leq \beta \cdot \frac{d_{M}(u,v)}{r}$. Since $\Pr [d_{M^*}(u,v) > r] = \Pr[(u,v) \not \in R(M^*)]$, we have that ${\cal D}$ is $\beta$-Lipschitz and this concludes the proof of the theorem.
\end{proof}

The following Theorem follows directly from the work of Chuzhoy and Khanna \cite{directedcut} and a result (Theorem 4.2) of Charikar, Makarychev and Makarychev \cite{directedpartitioning}.

\begin{theorem}\label{thm:lowerbnd 0-1}
There exists a quasimetric space $M$ such that any embedding of $M$ into a convex combination of 0-1 quasimetric spaces has distortion $\Omega \left(\frac{n^{1/7}}{\log^{4/7}{n}}\right)$.
\end{theorem}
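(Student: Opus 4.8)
The statement asserts the existence of a quasimetric space $M$ for which every embedding into a convex combination of 0-1 quasimetric spaces incurs distortion $\Omega(n^{1/7}/\log^{4/7} n)$. The natural route is to connect embeddings into convex combinations of 0-1 quasimetrics with the \emph{Directed Sparsest-Cut} / \emph{Directed Multicut} integrality gap, since a convex combination of 0-1 quasimetrics is precisely a ``distribution over directed cuts,'' and the LP relaxation of these cut problems is exactly the search for a fractional combination of such cuts that separates the demands. The plan is to take as $M$ the shortest-path quasimetric of the hard instance $G$ constructed by Chuzhoy and Khanna \cite{directedcut} for Directed Multicut (or the companion Directed Sparsest-Cut instance), which exhibits an integrality gap of $\Omega(n^{1/7}/\log^{O(1)} n)$ between the LP optimum and the true optimum.

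First I would recall (or invoke via Theorem~4.2 of \cite{directedpartitioning}) the standard equivalence: a quasimetric space $M=(X,d)$ embeds into a convex combination of 0-1 quasimetrics $\{(X,d_{P_i})\}$ with coefficients $\lambda_i \ge 0$, $\sum_i \lambda_i = 1$, and distortion $\alpha$ if and only if there is a fractional assignment of directed cuts such that for every ordered pair $(u,v)$, $d(u,v) \le \sum_i \lambda_i d_{P_i}(u,v) \le \alpha\, d(u,v)$. In LP-duality language, the existence of such an embedding with distortion $\alpha$ certifies that the LP value and the combinatorial value of the directed cut problem on $M$ differ by at most a factor $\alpha$. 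Contrapositively, an instance whose integrality gap is $\Omega(n^{1/7}/\log^{4/7} n)$ cannot have its quasimetric embedded into a convex combination of 0-1 quasimetrics with distortion $o(n^{1/7}/\log^{4/7} n)$.

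The key steps, in order: (i) fix $G$ and the demand set to be the Chuzhoy--Khanna construction, and set $M$ to be its shortest-path quasimetric; (ii) observe that an optimal fractional LP solution corresponds, after scaling, to an embedding of $M$ into a convex combination of 0-1 quasimetrics whose ``expansion'' on demand pairs equals the LP cost, so a low-distortion embedding would yield a fractional cut of cost comparable to the true minimum cut; (iii) invoke the $\Omega(n^{1/7}/\log^{4/7} n)$ integrality gap of \cite{directedcut} to conclude that no such low-distortion embedding exists; (iv) verify the bookkeeping so that the number of points of $M$ matches the parameter $n$ appearing in the gap bound, up to the polylogarithmic factors already absorbed in the statement.

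The main obstacle is step (ii): making the dictionary between ``embedding into a convex combination of 0-1 quasimetrics with bounded distortion'' and ``small integrality gap for the directed cut LP'' fully precise, including which direction of the inequality (non-contraction vs.\ bounded expansion) corresponds to feasibility vs.\ cost of the cut LP, and ensuring the demand structure of the Chuzhoy--Khanna instance is exactly what the 0-1 quasimetric embedding must respect. This is precisely the content of Theorem~4.2 of \cite{directedpartitioning}, so the proof will essentially amount to citing that equivalence and then plugging in the hard instance; I would present it as a short reduction rather than re-deriving the gap.
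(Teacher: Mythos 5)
Your proposal is correct and takes essentially the same route as the paper: the paper offers no standalone proof but simply asserts the theorem ``follows directly'' from the Chuzhoy--Khanna integrality-gap construction together with Theorem~4.2 (equivalently Theorem~A.1) of Charikar--Makarychev--Makarychev, and your sketch is a fleshed-out version of exactly that reduction. The only small imprecision is in step (i): the hard quasimetric $M$ should be the shortest-path quasimetric induced by the \emph{LP-optimal edge length assignment} on the Chuzhoy--Khanna instance (i.e., the quasimetric on $V(G)$ witnessing the maximum in the CMM equivalence), rather than the shortest-path quasimetric of $G$ with its original weights; once this is specified, the CMM theorem converts the $\Omega(n^{1/7}/\log^{4/7} n)$ integrality gap directly into the claimed distortion lower bound.
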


A lower bound on the flow-cut gap of Directed \spcut does not directly give the same lower bound on the quality of Lipschitz quasipartitions of quasimetric spaces. The following Theorem follows from the work of Chuzhoy and Khanna \cite{directedcut}.

\begin{theorem}\label{thm:lowerbnd}
There exists a quasimetric space $M$ and a positive $r$ such that any distribution over $r$-bounded quasipartitions of $M$ is $\Omega \left(\frac{n^{1/7}}{\log^{4/7}{n}}\right)$-Lipschitz.
\end{theorem}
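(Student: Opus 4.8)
The plan is to extract from the construction of Chuzhoy and Khanna~\cite{directedcut} a \emph{single-scale} instance of Directed Multicut on which even one well-chosen radius of quasipartitions would already produce a good integral multicut, and then invoke their polynomial flow-cut gap. Concretely, \cite{directedcut} produce, for infinitely many $n$, an $n$-vertex directed graph $G$ with edge capacities $c:E(G)\to\mathbb{R}^+$ and source--sink pairs $(s_1,t_1),\dots,(s_k,t_k)$ for which the maximum fractional multicommodity flow routing all pairs has some value $F$, while every set of edges whose removal destroys all $s_i$--$t_i$ connectivity has capacity at least $g\cdot F$, where $g=\Omega(n^{1/7}/\log^{4/7}n)$. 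By LP duality this is equivalent to the following: there is a length function $\ell:E(G)\to\mathbb{R}^+$ with $d_\ell(s_i,t_i)\ge 1$ for all $i$ and $\sum_{e}c(e)\ell(e)=F$, whereas any edge set separating all pairs has capacity at least $g\cdot F$.

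I would then take $M=(V(G),d)$ to be the shortest-path quasimetric of $(G,\ell)$ and fix the single radius $r=1/2$. Suppose, towards a contradiction, that there is a $\beta$-Lipschitz distribution $\mathcal{D}$ over $r$-bounded quasipartitions of $M$ with $\beta<g/2$. Draw $P\sim\mathcal{D}$ and set $F_P=\{(u,v)\in E(G) : (u,v)\notin P\}$. The key observation is that $F_P$ is always a feasible multicut: if some $s_i$ reached $t_i$ in $G-F_P$ along a walk $s_i=w_0\to w_1\to\cdots\to w_\ell=t_i$, then each $(w_j,w_{j+1})$ would lie in $P$, so transitivity of $P$ would force $(s_i,t_i)\in P$, and then $r$-boundedness of $P$ would give $1\le d(s_i,t_i)\le r=1/2$, a contradiction. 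Hence, viewing $P$ as the $0$-$1$ quasimetric $d_P$ with $d_P(u,v)=0$ iff $(u,v)\in P$, the edge set $F_P$ separates every demand pair.

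Next I would bound the expected capacity of $F_P$. Using $\beta$-Lipschitzness on each edge together with $d(u,v)\le\ell(u,v)$,
\[
\mathbb{E}_{P\sim\mathcal{D}}\!\left[\sum_{e=(u,v)\in E(G)}c(e)\,d_P(u,v)\right]=\sum_{e=(u,v)\in E(G)}c(e)\,\Pr_{P\sim\mathcal{D}}[(u,v)\notin P]\le\frac{\beta}{r}\sum_{e}c(e)\,\ell(e)=2\beta F.
\]
Thus some realization of $P$ yields a feasible multicut of capacity at most $2\beta F$, so the minimum multicut of $G$ is at most $2\beta F<gF$, contradicting the flow-cut gap lower bound. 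Therefore every distribution over $(1/2)$-bounded quasipartitions of $M$ must be $\Omega(n^{1/7}/\log^{4/7}n)$-Lipschitz, which is the claim (and the desired quasimetric and radius are $M$ and $r=1/2$).

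The main obstacle is the bookkeeping in the first paragraph: one must verify that the Chuzhoy--Khanna gap can be phrased as a gap between the Directed Multicut LP value $F$ and the integral multicut on an $n$-vertex graph, so that all demand pairs genuinely sit at LP-distance at least $1$, i.e.\ at a single common scale. This single-scale feature is precisely what lets the argument avoid the extra $\Theta(\log n)$ loss one would incur by instead combining Theorem~\ref{embedding into 0-1 quasimetrics} with the embedding lower bound of Theorem~\ref{thm:lowerbnd 0-1} (which, being of sparsest-cut type, mixes all scales); this is the content of the remark preceding the statement.
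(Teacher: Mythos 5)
Your argument is correct, and it reaches the same conclusion as the paper but by a more conceptual route. The paper never phrases things in terms of the Directed Multicut LP: it uses the Chuzhoy--Khanna \emph{vertex} multicut instance directly, records two combinatorial properties (every $s_i$-to-$t_i$ path hits at least $L=\Omega(n^{1/7}/\log^{4/7}n)$ non-terminal vertices; removing fewer than $\Omega(n)$ non-terminal vertices cannot separate all pairs), performs the vertex-splitting transformation (non-terminal $v\mapsto v^+,v^-$ with a unit-length edge $(v^+,v^-)$ and zero-length edges for the original arcs), sets $r=L-\epsilon$, and then does exactly the expectation/union-bound computation you do, but with everything made explicit: each of the $n$ unit-length edges is severed with probability at most $\beta/r$, yet in every realization $\Omega(n)$ of them must be severed, giving $\beta=\Omega(r)=\Omega(L)$. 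You instead appeal to the flow-cut gap for edge Directed Multicut (by LP duality, a length function $\ell$ with all demand pairs at distance $\ge 1$ and cost $F$, while every integral multicut has capacity $\ge gF$), pick the single radius $r=1/2$, and run the same feasibility-plus-expected-capacity argument. The content is the same; your version abstracts the Chuzhoy--Khanna construction into the flow-cut-gap statement rather than re-deriving it from the two stated properties, which is cleaner but does require that the edge-multicut gap is stated (or is an easy consequence of) what is in \cite{directedcut} --- the paper avoids this by performing the vertex-splitting reduction itself. Your closing observation, that the single-scale nature of the Multicut LP (all demands at distance $\ge 1$) is what lets one bypass the extra $\log n$ loss that would come from composing Theorem~\ref{embedding into 0-1 quasimetrics} with the sparsest-cut-type Theorem~\ref{thm:lowerbnd 0-1}, is exactly the point of the paper's remark preceding the theorem, and your $r=1/2$ normalization plays the same role as the paper's choice $r=L-\epsilon$.
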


\begin{proof}
We use a construction proposed by J. Chuzhoy and S. Khanna \cite{directedcut} for the Vertex Multicut problem. They construct a Multicut instance $G = (V,T,E)$ which is an unweighted directed acyclic graph where $V$ is the set of non-terminal vertices, $T$ is the set of source and sink vertices and $E$ the set of edges. This construction satisfies the following two properties:
\begin{description}
\item{Property 1:} Any path connecting a source vertex $s_i$ to its corresponding sink $t_i$ contains at least $L$ non-terminal vertices where $L=\Omega (\frac{n^{1/7}}{\log^{4/7}{n}})$ where $n = |V|$.
\item{Property 2:} At least $\Omega(n)$ non-terminal vertices must be removed to disconnect all source-sink pairs.
\end{description} 
We will construct a directed graph $G^* = (V^*,T,E^*)$ from $G$. Let $|V^*| = N$. Replace every non-terminal vertex $v \in V$ by $v^+$ and $v^-$ in $V^*$ and add the directed edge $(v^+,v^-)$ with unit weight in $E^*$. We will refer to these edges as \emph{weighted} edges. Clearly the number of \emph{weighted} edges is exactly $n$ and $N = |V^*| \leq 2 |V| = 2  n$. For any edge $(x,y) \in G$ add an edge $(x^-,y^+)$ with edge weight zero to $E^*$. We will refer to these edges as \emph{unweighted} edges.

Consider $M$, the shortest path quasimetric space induced by $G^*$ and let $r = L - \epsilon$. Let ${\cal D}$ be a $\beta$-Lipschitz distribution over $r$-bounded quasipartitions of $M$. Let $P \in {\cal D} $ be any $r$-bounded quasipartition of $M$. We know that for any \emph{unweighted} edge $(u^+,v^-) \in E^*$ there must be an ordered pair $(u^+,v^-) \in P$ because $d_M(u^+,v^-) = 0$ (Otherwise $\beta$ is $\infty$). Let the set of \emph{weighted} edges be $E_1 = \{(v_1^-,v_1^+),(v_2^-,v_2^+),\ldots,(v_n^-,v_n^+)\}$. Let $x_P : E_1 \rightarrow \{0,1\}$ be an indicator function. $x_P((v_i^-,v_i^+)) = 1$ if $(v_1^-,v_1^+) \not\in P$ otherwise $x_P((v_i^-,v_i^+)) = 0$. Since $d_M(s_i,t_i) \geq L$ for any source-sink pair $s_i,t_i \in T$ from Property 1, and any quasipartition is transitive, it must be that every source-sink pair is disconnected in $P$. This along with Property 2 gives, $$\sum_{e \in E_1} x_P(e) \geq \Omega (n), \, P \in {\cal D}$$
Which implies that, 
$$\displaystyle \mathop{\mathbb{E}}_{P\sim {\cal D}} \left[\sum_{e \in E_1} x_P(e)\right] \geq \Omega (n)$$
But taking union bound we get,
$$\displaystyle \mathop{\mathbb{E}}_{P\sim {\cal D}} \left[\sum_{e \in E_1} x_P(e) \right] \leq \sum_{e \in E_1}\Pr_{P\sim {\cal D}}[e \not \in P] $$
Therefore,
$$\sum_{e \in E_1}\Pr_{P\sim {\cal D}}[e \not \in P] \geq \Omega (n)$$
Since ${\cal D}$ is a $\beta$-Lipschitz distribution we have,
$\displaystyle \sum_{e \in E_1} \beta  \frac{d_M(e)}{r} \geq \Omega (n)$. Summing over all the \emph{weighted} edges we get that $n  \beta  \frac{1}{r} \geq \Omega (n) $. This implies that,
\begin{align*}
\beta &\geq \Omega (n) \frac{r}{n} \geq \Omega (L) \geq \Omega \left(\frac{n^{1/7}}{\log^{4/7}{n}}\right) \geq \Omega \left(\frac{N^{1/7}}{\log^{4/7}{N}}\right)
\end{align*}
This concludes the proof.
\end{proof}

Combining Theorems \ref{thm:q_lower} and \ref{thm:lowerbnd} gives the following Corollary.
\begin{corollary}

There exists an $n$-point quasimetric space $M = (X,d)$ such that any random embedding of $M$ into quasiultrametric spaces has distortion $\Omega\left(\frac{n^{1/7}}{\log{n}^{4/7} }\right)$.

\end{corollary}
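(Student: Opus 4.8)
The plan is to derive the bound by composing the two preceding theorems in a short contrapositive argument, so there is essentially no new content to prove. First I would fix the $n$-point quasimetric space $M=(X,d)$ together with the value $r>0$ whose joint existence is guaranteed by Theorem~\ref{thm:lowerbnd}; by that theorem, every distribution over $r$-bounded quasipartitions of this particular $M$ is $\Omega(n^{1/7}/\log^{4/7} n)$-Lipschitz.

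Next, suppose toward a contradiction that this $M$ admits a random embedding into quasiultrametric spaces with distortion $\beta$. Applying Theorem~\ref{thm:q_lower} \emph{at the specific value $r$ fixed above} yields a $\beta$-Lipschitz distribution over $r$-bounded quasipartitions of $M$. Confronting this with the lower bound from Theorem~\ref{thm:lowerbnd} forces $\beta = \Omega(n^{1/7}/\log^{4/7} n)$, which is precisely the asserted lower bound on the distortion of any random embedding of $M$ into quasiultrametrics.

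The only step requiring any care is bookkeeping rather than mathematics: one must confirm that the parameter $n$ (equivalently $N$) in Theorem~\ref{thm:lowerbnd} is the same as the number of points of the space supplied to Theorem~\ref{thm:q_lower}, and that the radius $r$ is consistent between the two statements. This is legitimate because Theorem~\ref{thm:q_lower} produces a Lipschitz quasipartition distribution for \emph{every} $r>0$, so instantiating it at the single $r$ delivered by Theorem~\ref{thm:lowerbnd} is allowed. I do not expect any genuine obstacle here: all the substantive work — the Chuzhoy--Khanna instance and the reduction from its disconnection properties to a Lipschitz lower bound for quasipartitions — is already encapsulated in Theorem~\ref{thm:lowerbnd}, and Theorem~\ref{thm:q_lower} provides exactly the bridge from random quasiultrametric embeddings back to quasipartitions that makes the two bounds comparable.
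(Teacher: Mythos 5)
Your argument is exactly the paper's: the corollary is obtained by instantiating Theorem~\ref{thm:q_lower} on the quasimetric space $M$ and parameter $r$ produced by Theorem~\ref{thm:lowerbnd}, then reading off the resulting lower bound on~$\beta$. The only cosmetic difference is that you phrase it as a contradiction when it is really a direct chain of implications, but the substance matches the paper's one-line ``combining Theorems~\ref{thm:q_lower} and~\ref{thm:lowerbnd}'' proof.
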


\begin{theorem}\label{theo:dags}
Let $G = (E,V)$ be a directed acyclic graph. Let $G' = (E',V')$ be the graph obtained by subdividing each directed edge of $G$ into three edges in the same direction, i.e., for any directed edge $(u,v) \in E$, $E'$ contains the directed edges $(u,x_{uv}), (x_{uv},y_{uv}), (y_{uv},v)$ obtained by subdividing $(u,v)$. Suppose that there exists an embedding of $G'$ into some graph $H$ with bounded distortion. Then $H$ must contain $G$ as a minor.
\end{theorem}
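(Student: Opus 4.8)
The plan is to show that a bounded-distortion embedding $f$ of $G'$ into $H$ lets us route, for every original arc $(u,v)\in E$, a directed path in $H$ from $f(u)$ to $f(v)$, and that these paths can be chosen to be internally disjoint across distinct arcs; together with vertex-disjointness of the images $f(u)$ for $u\in V(G)$, this exhibits $G$ as a topological minor (hence a minor) of $H$. The role of the subdivision into \emph{three} edges is to create, along each arc, a ``private'' middle portion $x_{uv}\to y_{uv}$ that is far (in the shortest-path quasimetric of $G'$) from everything associated with other arcs, which is what forces the routing paths in $H$ to avoid one another.

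First I would set up notation: let $D$ be the distortion of the embedding, so $d_{G'}(a,b)\le d_H(f(a),f(b))\le D\,d_{G'}(a,b)$ for all $a,b$; assume unit edge lengths on $G'$ so that $d_{G'}(u,v)=3$ for an arc $(u,v)\in E$ and branch vertices of $G$ are pairwise at distance $\Theta(1)$ only through shared endpoints. After scaling, one may instead subdivide each edge of $G$ into $3N$ or more edges (for $N$ large compared to $D$ and $|E|$); this is a harmless strengthening since a subdivision of a subdivision is still a subdivision of $G$, and it makes all the ``far apart'' estimates have slack. For each arc $e=(u,v)$, the image under $f$ of the subdivided path $u,x^{(1)}_e,\dots,v$ is a walk in $H$ from $f(u)$ to $f(v)$; extract from it a simple directed path $Q_e$. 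Define $f(u)$, $u\in V(G)$, to be the branch vertices in $H$.

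The core step is disjointness. Suppose a vertex $h\in V(H)$ lies on both $Q_e$ and $Q_{e'}$ for arcs $e\ne e'$, or $h=f(w)$ for some branch vertex $w$ not an endpoint of $e$ while $h\in Q_e$. Then $h$ is within $D$ of the image of some subdivision vertex of $e$ and within $D$ of the image of some subdivision vertex of $e'$ (or of $f(w)$), so by the triangle inequality in $H$ and the lower bound of the embedding, two subdivision vertices lying on different arcs (or a subdivision vertex of $e$ and an unrelated branch vertex) are at $d_{G'}$-distance $O(D^2)$ — but in $G'$, once $N \gg D^2$, such pairs are separated by more than $N$ unless they share the branch vertex that is the common endpoint. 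Chasing this case analysis (the interior subdivision vertices of $e$ are at distance $\ge N$ from everything except the two ``arms'' of $e$, which themselves only approach the two endpoints of $e$) pins down exactly where overlaps may occur: only at the prescribed branch vertices. A short local surgery then removes any residual intersections among the $Q_e$ at shared branch vertices, yielding internally disjoint directed paths realizing each arc of $G$.

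The main obstacle I anticipate is the direction bookkeeping: in a quasimetric, ``$h$ is close to $a$'' and ``$h$ is close to $b$'' are one-sided statements, so I must be careful to combine $d_H(f(a),h)$ and $d_H(h,f(b))$ in the correct order to get a bound on $d_H(f(a),f(b))$ and hence on $d_{G'}(a,b)$ — and, crucially, to argue that the extracted path $Q_e$ genuinely goes \emph{from} $f(u)$ \emph{to} $f(v)$ in $H$ respecting orientation (it does, since it is obtained from a directed walk, but the disjointness argument must use directed distances consistently). A secondary subtlety is that $H$ need not be simple or acyclic, so $Q_e$ must be extracted as a simple directed path from a possibly self-intersecting directed walk, and one should confirm the distortion hypothesis is only used between pairs of \emph{vertices of} $G'$, not arbitrary points. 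Once the one-sided triangle inequalities are marshalled correctly, the rest is the routine separation computation sketched above.
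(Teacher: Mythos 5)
Your high-level plan (route a directed path $Q_e$ in $H$ for each arc $e$ of $G$, argue that these paths only overlap at shared endpoints, then contract to get $G$ as a minor) matches the paper's. However, the core disjointness argument has a genuine flaw, and the quantitative framing you set up is the wrong one for a quasimetric arising from a DAG.

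First, the step where you ``harmlessly'' replace the $3$-fold subdivision by a $3N$-fold subdivision with $N\gg D^2$ is not permissible. The theorem fixes $G'$ as the $3$-fold subdivision, and $D$ is the distortion of the \emph{given} embedding of that particular $G'$ into $H$. You cannot retroactively subdivide more finely as a function of $D$: there is no induced embedding of the finer subdivision into the same $H$, and proving the statement for $3N$-fold subdivisions is a different (weaker) theorem. Second, and more fundamentally, your separation estimate is measuring the wrong quantity. In a DAG, the shortest-path quasimetric takes the value $+\infty$ between non-reachable pairs, and the middle vertices $x_e,y_e$ are \emph{not} at large finite directed distance from the rest of $G'$ --- they may be a handful of hops from many other vertices along directed paths through $G$. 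So the claim that ``interior subdivision vertices of $e$ are at distance $\ge N$ from everything except the two arms of $e$'' is simply false in the directed setting, even after subdividing finely. No amount of extra subdivision creates the slack you are invoking.

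What actually makes the proof work --- and what the paper does --- is an infinite-vs-finite (reachability) argument rather than a large-vs-small one. If two of the canonical paths intersect at $h\in V(H)$, then both $d_H(\phi(a),\phi(b'))$ and $d_H(\phi(a'),\phi(b))$ are finite for appropriate vertices $a,b,a',b'$ of $G'$, and bounded contraction then forces both $d_{G'}(a,b')$ and $d_{G'}(a',b)$ to be finite. The decisive structural fact is that each $x_{uv}$ has a unique in-edge $(u,x_{uv})$ and each $y_{uv}$ a unique out-edge $(y_{uv},v)$; this lets one upgrade ``there is a directed path from $x_{cd}$ to $y_{ab}$'' to ``there is a directed path from $x_{cd}$ to $x_{ab}$,'' and symmetrically, producing a directed cycle in $G'$ and contradicting acyclicity. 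Your sketch never isolates this unique in-/out-edge property, which is precisely why you need three subdivision vertices per arc and why the argument goes through without any quantitative gap. I'd suggest dropping the $N\gg D^2$ scaling and the $O(D^2)$ estimate entirely, and redoing the disjointness case analysis in terms of reachability and the forced routing through $x_e$ and $y_e$, handling separately the cases where the two arcs share a head, share a tail, or are endpoint-disjoint.
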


\begin{proof}
First we observe that since $G$ is a DAG it must be that $G'$ is also a DAG by construction. We note that $V' = V \cup \{x_{uv} : (u,v) \in E\} \cup \{y_{uv} : (u,v) \in E\}$. Consider any pair of edges $(u,v)$ and $(w,z)$ in $E$. We have that either $d_G(v,w) \rightarrow \infty$ or $d_G(z,u) \rightarrow \infty$ because $G$ is a DAG. Now let $\phi : V' \rightarrow X$ be a bounded distortion embedding of $G'$ into a directed graph $H = (X,E'')$. For any $u,v \in E''$ we will use the notation $P(u,v)$ to denote the shortest directed path from $u$ to $v$ in $H$. We claim that for all $(u,v) \in E'$ there has to be a directed path from $\phi(u)$ to $\phi(v)$ in $H$. This follows from the fact that the embedding has bounded expansion.

Let $(a,b),(c,d)\in E$ and $(c,d) \neq (a,b)$. Then $P(\phi(x_{cd}),\phi(y_{cd}))$ and $P(\phi(x_{ab}),\phi(y_{ab}))$ do not intersect in $H$. We can prove this by contradiction. Suppose $P(\phi(x_{cd}),\phi(y_{cd}))$ and $P(\phi(x_{ab}),\phi(y_{ab}))$ do intersect in $H$ then $d_H(\phi(x_{cd}),\phi(y_{ab}))$ and $d_H(\phi(x_{ab}),\phi(y_{cd}))$ are bounded. Since the embedding $\phi$ has bounded contraction this implies that $d_{G'}(x_{cd},y_{ab})$ and $d_{G'}(x_{ab},y_{cd})$ are also bounded. So there exists a directed path from $x_{cd}$ to $y_{ab}$ in $G'$. However the only incoming edge to $y_{ab}$ is $(x_{ab},y_{ab})$. This implies that the directed path from  $x_{cd}$ to $y_{ab}$ in $G'$ goes through $x_{ab}$. Similarly it must be that there exists a directed from  $x_{ab}$ to $y_{cd}$ in $G'$ that goes through $x_{cd}$. The consequence of this is that there exists a directed path from $x_{cd}$ to $x_{ab}$ and a directed path from $x_{ab}$ to $x_{cd}$. This contradicts the fact that $G'$ is a DAG.

Let $(a,b),(c,d)\in E$ where $a \neq d$ and let $(c,d) \neq (a,b)$. Then $P(\phi(a),\phi(x_{ab}))$ and $P(\phi(y_{cd}),\phi(d))$ do not intersect in $H$. We can prove this by contradiction. Suppose $P(\phi(a),\phi(x_{ab}))$ and $P(\phi(y_{cd}),\phi(d))$ do intersect in $H$ then $d_H(\phi(a),\phi(d))$ and $d_H(\phi(y_{cd}),\phi(x_{ab}))$ are bounded. Since the embedding $\phi$ has bounded contraction this implies that $d_{G'}(a,d)$ and $d_{G'}(y_{cd},x_{ab})$ are also bounded. So there exists a directed path from $a$ to $d$ in $G'$. Similarly there exists a directed path from $y_{cd}$ to $x_{ab}$ in $G'$. However the only outgoing edge of $y_{cd}$ is $(y_{cd},d)$ and the only incoming edge of $x_{ab}$ is $(a,x_{ab})$. Therefore this directed path goes through $d$ and then $a$ implying that there is a directed path from $d$ to $a$. This contradicts the fact that $G'$ is a DAG since $a$ and $d$ are part of a directed cycle.

Let $(a,b),(c,d)\in E$ where $a \neq c$. Then $P(\phi(a),\phi(x_{ab}))$ and $P(\phi(c),\phi(x_{cd}))$ do not intersect in $H$. We can prove this by contradiction. Suppose $P(\phi(a),\phi(x_{ab}))$ and $P(\phi(c),\phi(x_{cd}))$ do intersect in $H$ then $d_H(\phi(a),\phi(x_{cd}))$ and $d_H(\phi(c),\phi(x_{ab}))$ are bounded. Since the embedding $\phi$ has bounded contraction this implies that $d_{G'}(a,x_{cd})$ and $d_{G'}(c,x_{ab})$ are also bounded. So there exists a directed path from $a$ to $x_{cd}$ in $G'$. Since the only incoming edge of $x_{cd}$ is $(c,x_{cd})$ this directed path goes through $c$. Consequently there is a directed path from $a$ to $c$. Similarly there exists a directed path from $c$ to $x_{ab}$ in $G'$. However the only incoming edge of $x_{ab}$ is $(a,x_{ab})$. Therefore this directed path goes through $a$ implying that there is a directed path from $c$ to $a$. This contradicts the fact that $G'$ is a DAG since $a$ and $c$ are part of a directed cycle.

Let $(a,b),(c,d)\in E$ where $b \neq d$. Then $P(\phi(y_{ab}),\phi(b))$ and $P(\phi(y_{cd}),\phi(d))$ do not intersect in $H$. We can prove this by contradiction. Suppose $P(\phi(y_{ab}),\phi(b))$ and $P(\phi(y_{cd}),\phi(d))$ do intersect in $H$ then $d_H(\phi(y_{ab}),\phi(d))$ and $d_H(\phi(y_{cd}),\phi(b))$ are bounded. Since the embedding $\phi$ has bounded contraction this implies that $d_{G'}(y_{ab},d)$ and $d_{G'}(y_{cd},b)$ are also bounded. So there exists a directed path from $y_{ab}$ to $d$ in $G'$. Since the only outgoing edge of $y_{ab}$ is $(y_{ab},b)$ this directed path goes through $b$. Consequently there is a directed path from $b$ to $d$. Similarly there exists a directed path from $y_{cd}$ to $b$ in $G'$. However the only outgoing edge of $y_{cd}$ is $(y_{cd},d)$. Therefore this directed path goes through $d$ implying that there is a directed path from $d$ to $b$. This contradicts the fact that $G'$ is a DAG since $b$ and $d$ are part of a directed cycle.

We are now ready to find a $G$ minor in $H$. First we delete all edges and vertices that are not part of some $P(\phi(u),\phi(v))$ where $(u,v) \in E'$. Then for all $(u,v) \in E$ we merge $P(\phi(u),\phi(x_{uv}))$ with $u$ and $P(\phi(y_{uv}),\phi(v))$ with $v$ using edge contractions. For all $u \in V$ we denote the resulting merged vertex in $H$ by $u^*$. Finally for all $(u,v) \in E$ we contract the edges of $P(\phi(x_{uv}),\phi(y_{uv}))$ until we are left with a single directed edge from $u^*$ to $v^*$. This does not cause any merged vertex $a^*$ to merge with another merged vertex $b^*$ since the paths of the form $P(\phi(x_{ab}),\phi(y_{ab}))$ are non-intersecting. For all $(u,v) \in E$ the resulting graph has a directed edge $(u^*,v^*)$. This gives us the required minor of $G$ and concludes the proof.
\end{proof}

\section{Applications to directed cut problems}

In this section we will use the results from the previous section, for embedding quasimetric spaces into convex combinations of 0-1 quasimetric spaces, to get approximation algorithms for some cut problems on directed graphs.

\paragraph*{Directed Non-Bipartite \spcut}

Let $G$ be a directed graph with non-negative capacities $c(e)$ for all $e \in E(G)$. Let $T$ be a set of terminal vertex pairs $\{(s_1,t_1),(s_2,t_2),\ldots,(s_k,t_k)\}$. Let $\dem(i)$ be a non-negative demand for the terminal pair $(s_i,t_i)$. A \emph{cut} of $G$ is a set of edges $S \subset E(G)$. We define the \emph{capacity} of the cut $S$ to be $c(S) = \sum_{e \in S} c(e)$. Let $I_S$ be the set of all integers $i\in \{1,\ldots,k\}$ such that all paths from $s_i$ to $t_i$ have at least one edge in $S$. We define the demand separated by the cut $S$ to be $\dem(S) = \sum_{i\in I_S} \dem(i)$. The \emph{sparsity} of a cut $S$ is defined to be $c(S)/\dem(S)$. The goal of the Directed Non-Bipartite \spcut problem is to find the cut with minimum sparsity.

For non-uniform demands, there exists a $O(\sqrt{n})$-approximation by \cite{aprxdirectedsparsestcut}, which has been improved to $\tilde{O}(n^{11/23})$-approximation \cite{bestsparsestmulticut}. 
There is also a $2^{\Omega(\log^{1 - \epsilon}{n})}$-hardness due to Chuzhoy and Khanna \cite{directedcut}.

Consider the following standard LP relaxation of the Directed Non-Bipartite \spcut Problem on $G$.
\begin{align*}
 \displaystyle \min & \sum_{e \in E(G)} c(e)x(e)\\
&\sum_{(s_i,t_i) \in T} \dem(i)d(s_i,t_i) \geq 1\\
& x(e) \geq 0 \qquad \forall e \in E(G)\\
& d(u,v) \geq 0 \qquad \forall u,v \in V(G)
\end{align*}

In the LP, the $x(e)$ values can be treated as distance assignments for the edges. The $d(u,v)$ values are the shortest path distances from $u$ to $v$ in $G$ for edge weights defined by the distance assignments. Since $d$ is the shortest path distance, it follows that for all $(u,v) \in E$ we have that $d(u,v) \leq x(e)$ where $e = (u,v)$. Since replacing $x(e)$ with $d(u,v)$ only reduces the objective function while preserving feasibility, we have that the optimal edge distance assignment of the LP is given by a shortest path quasimetric space on $G$.

Charikar, Makarychev and Makarychev \cite{directedpartitioning} showed that the integrality gap of this LP is closely related to the minimum distortion achievable for embedding a quasimetric space into a convex combination of 0-1 quasimetric spaces. Theorem A.1 from their paper implies that the integrality gap of the LP for a graph $G$ with edge capacities $c(e)$ is equal to the minimum distortion for embedding a shortest path quasimetric space supported on $G$ into a convex combination of 0-1 quasimetric spaces (referred to as 0-1 semimetrics in their paper). We now describe how the minimum distortion embedding of a quasimetric space into a convex combination of 0-1 quasimetric spaces can be used to upper bound the integrality gap of the LP. First we observe that a 0-1 quasimetric space on $V(G)$ corresponds to a cut of $G$. This is because a 0-1 quasimetric space $M = (V(G),d^*)$ can be used to describe a cut $S \subset E(G)$ where $(u,v) \in S$ iff $d^*(u,v) = 1$ for all $(u,v) \in E(G)$. Let the optimal \spcut value for the LP be $LP_{\OPT}$. Let the shortest path quasimetric space on $G$ that gives the optimal solution for the LP be $M_{\OPT} = (V(G),d_{\OPT})$. We have that,
\begin{align*} 
\displaystyle \frac{\sum_{e=(u,v) \in E(G)}c(e)d_{\OPT}(u,v)}{\sum_{i \in [0,k]}\dem(i)d_{\OPT}(s_i,t_i)} = LP_{\OPT} .\\
\end{align*} 
Let $ \phi = \sum_{ j \in [0,m]} \alpha_{j} M_j $ be an $O(\log^2 (n))$ embedding of $M_{\OPT}$ into a convex combination of 0-1 quasimetric spaces, where $M_j = (V(G),d_j)$ is a 0-1 quasimetric space and $\alpha_{j}$ is a non-negative real number for any $j \in [0,m]$. We have that,
\begin{align*}
O(\log^2(n)) LP_{\OPT} &\geq \frac{\sum_{e=(u,v) \in E(G)}c(e)\sum_{ j \in [0,m]} \alpha_{j}d_{j}(u,v)}{\sum_{i \in [0,k]}\dem(i)\sum_{ j \in [0,m]} \alpha_{j}d_{j}(s_i,t_i)} \\
&\geq \frac{\sum_{ j \in [0,m]} \alpha_{j} \sum_{e=(u,v) \in E(G)}c(e)d_{j}(u,v)}{\sum_{ j \in [0,m]} \alpha_{j}\sum_{i \in [0,k]}\dem(i)d_{j}(s_i,t_i)}\\
&\geq \min_{j \in [0,m]}{\frac{\sum_{e=(u,v) \in E(G)}c(e)d_{j}(u,v)}{\sum_{i \in [0,k]}\dem(i)d_{j}(s_i,t_i)}}.\\
\end{align*}
Therefore the cut corresponding to the 0-1 quasimetric space in $\phi$ having minimum sparsity gives an integral solution to the LP that is at most an $O(\log^2(n))$ factor larger than $LP_{\OPT}$. Combined with Corollary  \ref{cor:bounded treewidth 0-1 quasimetrics} and Theorem \ref{thm:tw poly algo} this implies the following corollary.


\begin{corollary}\label{cortw}
The integrality gap (which is also the flow-cut gap) of the Directed Non-Bipartite  \spcut LP relaxation on graphs of $n$ vertices and treewidth $t$ is $O(t\log^2 n)$. Moreover there exists a polynomial-time $O(t  \sqrt{\log t} \log^2 n)$-approximation algorithm for the Directed Non-Bipartite \spcut problem on such graphs with running time linear in $t$ and polynomial in $n$.
If a tree decomposition of width $t$ is given as part of the input, then there exists a polynomial-time $O(t \log^2 n)$-approximation algorithm.
\end{corollary}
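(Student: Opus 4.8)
\textbf{Proof proposal for Corollary~\ref{cortw}.}
The plan is to assemble the statement from three facts already in place: the reduction of Charikar--Makarychev--Makarychev discussed above, which equates the integrality gap of the LP on an $n$-vertex graph $G$ with the least distortion of an embedding of a shortest-path quasimetric on $G$ into a convex combination of $0$-$1$ quasimetric spaces; Corollary~\ref{cor:bounded treewidth 0-1 quasimetrics}, which bounds that distortion by $O(t\log^2 n)$ existentially and by $O(t\sqrt{\log t}\log^2 n)$ via a polynomial-time algorithm; and the averaging argument spelled out just before this corollary, which turns such a convex combination into a single cut of $G$ of comparable sparsity.

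For the first assertion I would combine the reduction with the existential half of Corollary~\ref{cor:bounded treewidth 0-1 quasimetrics}: every shortest-path quasimetric supported on an $n$-vertex graph of treewidth $t$ embeds into a convex combination of $0$-$1$ quasimetrics with distortion $O(t\log^2 n)$, hence the integrality gap is $O(t\log^2 n)$. By LP duality the optimum $LP_{\OPT}$ equals the maximum concurrent flow rate for the demands $\dem(\cdot)$, so this same quantity is the flow-cut gap.

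For the approximation algorithm the steps are: (1) solve the LP relaxation set up above in polynomial time and read off an optimal edge-length assignment, which as observed there is realized by a shortest-path quasimetric $M_{\OPT}=(V(G),d_{\OPT})$; (2) run the polynomial-time algorithm of Corollary~\ref{cor:bounded treewidth 0-1 quasimetrics} on $M_{\OPT}$ to obtain an explicit convex combination $\phi=\sum_j \alpha_j M_j$ of $0$-$1$ quasimetrics $M_j=(V(G),d_j)$ with distortion $O(t\sqrt{\log t}\log^2 n)$ and with polynomially many pieces; (3) read each $M_j$ as a cut $S_j\subseteq E(G)$ (with $(u,v)\in S_j$ iff $d_j(u,v)=1$) and output the $S_j$ of minimum sparsity. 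The averaging inequality reproduced before the corollary shows this cut has sparsity at most $O(t\sqrt{\log t}\log^2 n)\cdot LP_{\OPT}$, and since the LP is a relaxation, $LP_{\OPT}\le \OPT$, so this is an $O(t\sqrt{\log t}\log^2 n)$-approximation. The running time is polynomial in $n$, and the dependence on $t$ is linear, because $t$ enters only through the size of the balanced separators used inside Theorem~\ref{thm:tw poly algo} and the per-edge union bound over those separators, in contrast to dynamic programming over the bags of a tree decomposition, which would be exponential in $t$.

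If a tree decomposition of width $t$ is supplied, then in Step~2 of the algorithm of Theorem~\ref{treewidth quasipartitions result} one can extract an \emph{exact} balanced separator of size at most $t+1$ in polynomial time directly from the decomposition (Proposition~\ref{prop: balanced separators tw}), so the $\sqrt{\log t}$ loss of the approximate-separator routine is avoided; derandomizing the threshold $z$ over the fewer than $n^2$ relevant values exactly as in Theorem~\ref{thm:tw poly algo} then yields, in polynomial time, an $O(t\log n)$-Lipschitz family of $r$-bounded quasipartitions at each scale, hence via Theorem~\ref{embedding into 0-1 quasimetrics} an explicit $O(t\log^2 n)$-distortion embedding, and thus an $O(t\log^2 n)$-approximation by the same rounding. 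I expect the only point needing care to be bookkeeping rather than conceptual: checking that the convex combination handed to the rounding step has polynomially many $0$-$1$ pieces, which follows from the support bound of Theorem~\ref{thm:tw poly algo} applied at each of the $O(\log\Delta)$ scales $r=2^0,\dots,2^{\lfloor\log\Delta\rfloor}$ used in the proof of Theorem~\ref{embedding into 0-1 quasimetrics}, after the standard rescaling making the minimum distance $1$ and the diameter polynomially bounded; everything else is a direct appeal to the results cited above.
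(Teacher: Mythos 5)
Your proposal is correct and follows essentially the same route the paper takes: invoke the Charikar--Makarychev--Makarychev equivalence between the LP integrality gap and the least distortion into a convex combination of $0$-$1$ quasimetrics, apply Corollary~\ref{cor:bounded treewidth 0-1 quasimetrics} (existentially for the $O(t\log^2 n)$ gap, algorithmically via Theorem~\ref{thm:tw poly algo} for the $O(t\sqrt{\log t}\log^2 n)$ bound), and round by taking the sparsest $0$-$1$ piece; the tree-decomposition case is handled exactly as you describe, by using exact balanced separators to avoid the $\sqrt{\log t}$ loss. The small additions you make — the LP-duality remark for the flow-cut gap and the polynomial support-size bookkeeping — are accurate and consistent with the paper's implicit argument.
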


\paragraph*{Directed Multicut}
Let $G$ be a directed graph with non-negative capacities $c(e)$ for all $e \in E(G)$. Let $T$ be a set of terminal vertex pairs $\{(s_1,t_1),(s_2,t_2),\ldots,(s_k,t_k)\}$. The goal of the Directed Multicut problem is to find the minimum capacity cut that separates all terminal vertex pairs.

There is a $\tilde{O}(n^{2/3}/\OPT^{1/3})$-approximation by Kortsarts, Kortsarz and Nutov \cite{aprxdirectedmulticut3}, and a $\tilde{O}(n^{11/23})$-approximation by Agarwal, Alon and Charikar \cite{bestsparsestmulticut}. 
Finally, there is a $2^{\Omega(\log^{1 - \epsilon}{n})}$-hardness due to Chuzhoy and Khanna \cite{directedcut}.

It is known that, for all $\beta>0$, if there exists a polynomial time $\beta$-approximation algorithm for Directed Multicut then there also exists a  polynomial time $O(\beta \log n)$-approximation for Directed Non-Bipartite \spcut (Section 7 of \cite{bestsparsestmulticut}).
Moreover, for all $\alpha>0$, if there exists a polynomial time $\alpha$-approximation algorithm for Directed Non-Bipartite \spcut then there also exists a polynomial time $O(\alpha \log n)$-approximation for Directed Multicut.

Given an $\alpha$-approximation for the Directed Non-Bipartite \spcut Problem we can find an $(\alpha \log{n})$-approximation for the Directed Multicut Problem. The following algorithm gives us the required multicut.

\begin{algorithm}[htbp]
\caption{Reduction from Directed Multicut to Directed Non-bipartite Sparsest-Cut}
\begin{algorithmic}
\Require A directed graph $G$ with non-negative capacities $c(e)$ for all $e \in E(G)$. A set of terminal vertex pairs $T = \{(s_1,t_1),(s_2,t_2),\ldots,(s_k,t_k)\}$.
\Ensure An $(\alpha \log{n})$-approximation for the Directed Multicut Problem on $G$.\\
Let $T'$ be the current set of terminal vertices to be separated; initially $T' = T$. Let $i=1$.
\begin{description}
\item{Step 1.} Find $J_i$, an $\alpha$-approximation for the Directed Non-Bipartite  \spcut Problem on $G$, where the set of terminal pairs is given by $T'$, and the demand for all terminal pairs is set to $1$. Let $S_i \subset T'$ be the set of terminal pairs separated by the cut. Set $T' = T' - S_i$. Let $i=i+1$.
\item{Step 2.} If $T' \neq \emptyset$ repeat Step 1.
\end{description}
\end{algorithmic}
\end{algorithm}

The union of the directed non-bipartite cuts $\displaystyle \cup_{i} J_i$ gives the required directed multicut. Since the algorithm only terminates when all terminal pairs are separated it is indeed a multicut.

We can now upper bound the capacity of the cut. Let the capacity of the optimal multicut be $\OPT$. Since we use an $\alpha$-approximation for Directed \spcut we have that $\frac{c(J_i)}{|S_i|} \leq \alpha \frac{\\OPT}{|T'|}$ at every iteration of Step 1. This implies that
\begin{align*}
\sum_{i} c(J_i) &\leq \alpha \OPT\left( \frac{|S_1|}{|T|} + \frac{|S_2|}{|T-S_1|} + \frac{|S_3|}{|T-S_1-S_2|} + \ldots + \frac{|S_m|}{|S_m|} \right) \\
 &\leq O(\alpha \log{n})\OPT.
\end{align*}

\begin{corollary}
There exists an $O(t  \sqrt{\log t} \log^3 n)$-approximation algorithm for the Directed Multicut problem on $n$-vertex graphs of treewidth $t$, with running time polynomial in both $n$ and $t$.
If a tree decomposition of width $t$ is given as part of the input, then there exists a polynomial-time $O(t \log^3 n)$-approximation algorithm.
\end{corollary}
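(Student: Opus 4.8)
The plan is to compose the approximation algorithm for Directed Non-Bipartite \spcut from Corollary~\ref{cortw} with the reduction given by Algorithm~4. First I would invoke Corollary~\ref{cortw}: on an $n$-vertex graph of treewidth $t$ there is a polynomial-time $\alpha$-approximation for Directed Non-Bipartite \spcut with $\alpha = O(t\sqrt{\log t}\log^2 n)$ in general, and with $\alpha = O(t\log^2 n)$ if a width-$t$ tree decomposition is supplied. Then I would feed this subroutine into Algorithm~4, whose analysis (carried out in the text immediately preceding the statement) shows that an $\alpha$-approximation for Directed Non-Bipartite \spcut yields an $O(\alpha\log n)$-approximation for Directed Multicut, via the telescoping bound $\sum_i c(J_i)\le \alpha\,\OPT\big(\tfrac{|S_1|}{|T|}+\tfrac{|S_2|}{|T-S_1|}+\cdots\big)\le O(\alpha\log n)\,\OPT$. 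Substituting the two values of $\alpha$ gives the claimed bounds $O(t\sqrt{\log t}\log^3 n)$ and $O(t\log^3 n)$ respectively.

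The second step is to check the running time and the preservation of structure. The key observation is that in Algorithm~4 the underlying graph $G$ never changes — only the terminal set $T'$ shrinks — so the treewidth bound $t$ (and, in the second case, the given tree decomposition) remains valid at every iteration, and Corollary~\ref{cortw} applies unchanged in each call. Moreover each iteration of Step~1 removes at least one terminal pair from $T'$, so there are at most $|T|\le\binom{n}{2}$ iterations; since each call to the Directed Non-Bipartite \spcut algorithm runs in time polynomial in $n$ and $t$, the overall running time is polynomial in $n$ and $t$ as well. Finally, the union $\bigcup_i J_i$ is a feasible multicut because Algorithm~4 terminates only once $T'=\emptyset$, i.e.\ once every terminal pair has been separated by some $S_i$.

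I do not expect a real obstacle here, since the statement is a direct composition of results already established; the only points requiring (brief) care are exactly the two just noted — that the treewidth/tree-decomposition hypothesis is inherited by every recursive invocation because $G$ is fixed, and that the iteration count is polynomially bounded so that the polynomial running times multiply to a polynomial. With these in hand, the corollary follows immediately.
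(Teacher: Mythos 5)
Your proposal is correct and follows exactly the paper's route: compose the $\alpha = O(t\sqrt{\log t}\log^2 n)$ (resp.\ $O(t\log^2 n)$) approximation from Corollary~\ref{cortw} with the Algorithm~4 reduction and its telescoping $O(\alpha\log n)$ analysis given in the preceding text. The two points you flag for care --- that $G$ (hence its treewidth and any given decomposition) is fixed across iterations, and that the iteration count is polynomial --- are indeed the only things to check, and they hold as you argue.
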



\newpage
\bibliography{bibfile}



\end{document}